\newtheorem{theorem}{Theorem}
\newtheorem{lemma}{Lemma}
\title{State Complexity of Two Combined Operations: Reversal-Catenation and Star-Catenation}
\author{Bo Cui, Yuan Gao, Lila Kari, and Sheng Yu}
\begin{document}
\maketitle

\begin{abstract}
In this paper, we show that, due to the structural properties of the resulting automaton obtained from a prior operation, the state complexity of a combined operation may not be equal but close to the mathematical composition of the state complexities of its component operations.
In particular, we provide two witness combined operations: reversal combined with catenation and star combined with catenation.
\end{abstract}

\section{Introduction}
State complexity is a type of descriptional complexity based on {\it
deterministic finite automaton} (DFA) model. The state complexity of
an operation on regular languages is the number of states that are
necessary and sufficient in the worst case for the minimal, complete
DFA that accepts the resulting language of the operation. While many
results on the state complexities of individual operations, such as
union, intersection, catenation, star, reversal, shuffle, orthogonal
catenation, proportional removal, and cyclic
shift~\cite{CaSaYu02,DaDoSa08,Domaratzki02,HoKu02,JiJiSz05,JiOk05,Jriaskova05,SaWoYu04,YuZhSa94,Yu01},
have been obtained in the past 15 years, the research of state
complexities of combined operations, which was initiated by A.
Salomaa, K. Salomaa, and S. Yu in 2007~\cite{SaSaYu07}, is
attracting more attention. This is because, in practice, a
combination of several individual operations, rather than only one
individual operation, is often performed in a certain order. For
example, in order to obtain a precise regular expression, a
combination of basic operations is usually required.

In recent
publications~\cite{CGKY10-cat-sr,CGKY10-cat-ui,EGLY2009,GaSaYu08,GaYu09,GaYu10,JiOk07,LiMaSaYu08,SaSaYu07},
it has been shown that the state complexity of a combined operation
is not always a simple mathematical composition of the state
complexities of its component operations.
% and is more difficult to obtain than that of an individual operation, especially the tight lower bound.
This is sometimes due to the structural properties of the DFA accepting the resulting language obtained from a prior operation of a combined operation.
For example, the languages that are obtained from performing reversal and reach the upper bound of the state complexity of this operation are accepted by DFAs such that half of their states are final; and the initial state of the DFA accepting a language obtained after performing star is always a final state.
As a result, the resulting language obtained from a prior operation may not be among the worst cases of the subsequent operation.
Since such issues are not concerned by the study of the state complexity of individual operations, they are certainly important in the research of the state complexity of combined operations.
Although the number of combined operations is unlimited and it is impossible to study the state complexities of all of them, the study on combinations of two individual operations is clearly necessary.

In this paper, we study the state complexities of reversal combined with catenation, i.e., $L(A)^R L(B)$, and star combined with catenation, i.e., $L(A)^*L(B)$, for minimal complete DFAs $A$ and $B$ of sizes $m,n \ge 1$, respectively.
For $L(A)^R L(B)$, we will show that the general upper bound $\frac{3}{4}2^{m+n}$, which is close to the composition of the state complexities of reversal and catenation $2^{m+n} - 2^{n-1}$, is reachable when $m,n \ge 2$, and it can be lower to $2^{n-1}$ and $2^{m-1}+1$ when $ m = 1$ and $n \ge 1$ and when $m \ge 2$ and $n = 1$, respectively.
For $L(A)^*L(B)$, we will show that, if $A$ has only one final state and it is also the initial state, i.e., $L(A) = L(A)^*$, the state complexity of catenation (also $L(A)^*L(B)$) is $m(2^n-1)-2^{n-1}+1$, which is lower than that of catenation $m2^n - 2^{n-1}$.
In the other cases, that is when $A$ contains some final states that are not the initial state, the state complexity of $L(A)^*L(B)$ is $5 \cdot 2^{m+n-3} - 2^{m-1} - 2^n +1$ instead of $\frac{3}{4}2^{m+n} - 2^{n-1}$, the composition of the state complexities of star and catenation.

In the next section, we introduce the basic definitions and notations used in the paper.
Then, we prove our results on reversal combined with catenation and star combined with catenation in Sections~\ref{sec:rev-cat} and~\ref{sec:star-cat}, respectively.
We conclude the paper in Section~\ref{sec:conclusion}.

\section{Preliminaries}
A DFA is denoted by a 5-tuple $A = (Q,
\Sigma, \delta, s, F)$, where $Q$ is the finite set of states, $\Sigma$ is the finite input alphabet, $\delta: Q \times \Sigma \rightarrow Q$ is the state transition function, $s \in Q$ is the initial state, and $F \subseteq Q$ is the set of final states.
A DFA is said to be complete if $\delta(q,a)$ is defined for all $q \in Q$ and $a \in \Sigma$.
All the DFAs we mention in this paper are assumed to be complete.
We extend $\delta$ to $Q \times \Sigma^* \rightarrow Q$ in the usual way.

A {\it non-deterministic finite automaton} (NFA) is denoted by a 5-tuple $A = (Q,
\Sigma, \delta, s, F)$, where the definitions of $Q$, $\Sigma$, $s$, and $F$ are the same to those of DFAs, but the state transition function $\delta$ is defined as $\delta: Q \times \Sigma \to 2^Q$, where $2^Q$ denotes the power set of $Q$, i.e. the set of all subsets of $Q$.

In this paper, the state transition function $\delta$ is often extended to $\hat{\delta} : 2^Q \times \Sigma \rightarrow 2^Q$. The function $\hat{\delta}$ is defined by $\hat{\delta}(R,a) = \{\delta(r,a) \mid r \in R\}$, for $R \subseteq Q$ and $a \in \Sigma$.
We just write $\delta$ instead of $\hat{\delta}$ if there is no confusion.

A word $w \in \Sigma^*$ is accepted by a finite automaton if $\delta(s,w) \cap F
\neq \emptyset$.
Two states in a finite automaton $A$ are said to be {\it equivalent} if and only if for every word $w \in \Sigma^*$, if $A$ is started in either state with $w$ as input, it either accepts in both cases or rejects in both cases.
It is well-known that a language which is accepted by
an NFA can be accepted by a DFA, and such a language is said to be {\it regular}.
The language accepted by a DFA $A$ is denoted by $L(A)$.
The reader may refer to~\cite{HoMoUl01,Yu97} for more
details about regular languages and finite automata.

The {\it state complexity} of a regular language $L$, denoted by
$sc(L)$, is the number of states of the minimal complete DFA that
accepts $L$. The state complexity of a class $S$ of regular
languages, denoted by $sc(S)$, is the supremum among all $sc(L)$, $L
\in S$. The state complexity of an operation on regular languages is
the state complexity of the resulting languages from the operation as a function of the state complexity of the operand languages.
Thus, in a certain sense, the state complexity of an operation is a worst-case complexity.

\section{Reversal combined with catenation}\label{sec:rev-cat}
In this section, we study the state complexity of $L_1^R L_2$ for an
$m$-state DFA language $L_1$ and an $n$-state DFA language $L_2$. We
first show that the state complexity of $L_1^R L_2$ is upper bounded
by $\frac{3}{4}2^{m+n}$ in general (Theorem~\ref{L_1^R L_2 upper
bound}). Then we prove that this upper bound can be reached when
$m,n \ge 2$ (Theorem~\ref{L_1^R L_2 lower bound}). Next, we
investigate the case when $m=1$ and $n\ge 1$ and prove the state
complexity can be lower to $2^{n-1}$ in such a case
(Theorem~\ref{L_1^R L_2 state complexity m=1 n>=1}). Finally, we
show that the state complexity of $L_1^R L_2$ is $2^{m-1}+1$ when $m
\ge 2$ and $n=1$ (Theorem~\ref{L_1^R L_2 state complexity m>=2
n=1}).

Now, we start with a general upper bound of state complexity of
$L_1^R L_2$ for any integers $m,n \ge 1$.
\begin{theorem}
\label{L_1^R L_2 upper bound} For two integers $m,n \ge 1$, let
$L_1$ and $L_2$ be two regular languages accepted by an $m$-state
DFA and an $n$-state DFA, respectively. Then there exists a DFA of
at most $\frac{3}{4}2^{m+n}$ states that accepts $L_1^R L_2$.
\end{theorem}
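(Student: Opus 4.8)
The plan is to chain the standard subset construction for reversal with the standard product-type construction for catenation, and then simply count the states of the resulting DFA. Write $A = (Q_1, \Sigma, \delta_1, s_1, F_1)$ with $|Q_1| = m$ for $L_1$ and $B = (Q_2, \Sigma, \delta_2, s_2, F_2)$ with $|Q_2| = n$ for $L_2$.

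First I would build an NFA for $L_1^R$ by reversing $A$: keep the state set $Q_1$, reverse every transition, take $F_1$ as the set of initial states and $s_1$ as the unique final state. Determinizing this NFA by the subset construction yields a DFA $C = (2^{Q_1}, \Sigma, \delta, F_1, \mathcal{F})$ for $L_1^R$ with at most $2^m$ states, in which a subset $P \subseteq Q_1$ is final exactly when $s_1 \in P$. The key observation, and the reason the bound lies strictly below $2^{m+n}$, is that the power set $2^{Q_1}$ splits into exactly $2^{m-1}$ subsets that contain the fixed state $s_1$ (the final states of $C$) and $2^{m-1}$ subsets that do not (the non-final states of $C$).

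Next I would apply the usual construction for the catenation $L(C)L(B) = L_1^R L_2$: take states of the form $(P, S)$ with $P \in 2^{Q_1}$ and $S \subseteq Q_2$, start from $(F_1, \emptyset)$ or $(F_1, \{s_2\})$ according to whether $F_1$ is non-final or final in $C$, read a letter by moving the first component under $\delta$ and the second component under the subset transition on $Q_2$, and then insert $s_2$ into the second component whenever the new first component is a final state of $C$; the final states are the pairs $(P,S)$ with $S \cap F_2 \neq \emptyset$. Since every reachable pair whose first component $P$ is final in $C$ necessarily has $s_2 \in S$, there are only $2^{n-1}$ possibilities for $S$ in that case, whereas a pair with $P$ non-final has at most $2^n$ possibilities for $S$. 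Hence the number of states is at most
\[
(2^m - 2^{m-1})\cdot 2^n + 2^{m-1}\cdot 2^{n-1} = 2^{m+n-1} + 2^{m+n-2} = \tfrac{3}{4}\,2^{m+n},
\]
which exhibits the desired DFA.

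There is no genuine obstacle here beyond careful bookkeeping: both constructions are standard, and for an upper bound we need not establish reachability or minimality, only the existence of a DFA of this size. The single point that must be handled with care is the count of final states of the intermediate DFA $C$ — it is exactly the structural fact that a subset of $Q_1$ is accepting in $C$ if and only if it contains $s_1$ that forces half of the $2^m$ subsets to be final and so collapses the naive product bound $2^{m+n}$ down to $\tfrac{3}{4}2^{m+n}$, illustrating the phenomenon announced in the introduction. I would also remark that the formula is consistent with the boundary case $m=1$, where $C$ has two states, one final, yielding $2^n + 2^{n-1} = \tfrac{3}{4}2^{n+1}$.
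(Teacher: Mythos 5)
Your proposal is correct and follows essentially the same route as the paper: reverse the DFA, determinize to note that exactly half the $2^m$ subsets (those containing the original initial state) are final, then perform the standard catenation construction and discard the pairs with a final first component but second component missing $s_2$, giving $2^{m+n}-2^{m-1}2^{n-1}=\frac{3}{4}2^{m+n}$. The counting and construction match the paper's proof, so no further comment is needed.
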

\begin{proof}
Let $M=(Q_M,\Sigma , \delta_M , s_M, F_M)$ be a DFA of $m$ states,
$k_1$ final states and $L_1=L(M)$. Let $N=(Q_N,\Sigma , \delta_N ,
s_N, F_N)$ be another DFA of $n$ states and $L_2=L(N)$.

Let $M'=(Q_M,\Sigma , \delta_{M'} , F_M, \{s_M\})$ be an NFA with
$k_1$ initial states. $\delta_{M'}(p,a)=q$ if $\delta_M(q,a)=p$
where $a\in \Sigma$ and $p,q\in Q_M$. Clearly,
$$L(M')=L(M)^R=L_1^R.$$

By performing subset construction on NFA $M'$, we can get an
equivalent, $2^m$-state DFA $A=(Q_A,\Sigma , \delta_A , s_A, F_A)$
such that $L(A)=L_1^R$. Since $M'$ has only one final state $s_M$,
we know that $F_A=\{i\mid i\subseteq Q_M, s_M\in i\}$. Thus, $A$ has
$2^{m-1}$ final states in total. Now we construct a DFA
$B=(Q_B,\Sigma , \delta_B , s_B, F_B)$ accepting the language $L_1^R
L_2$, where
\begin{eqnarray*}
Q_B & = & \{\langle i,j \rangle \mid i\in Q_A\mbox{, } j\subseteq Q_N\},\\
s_B & = & \langle s_A,\emptyset \rangle, \mbox{ if } s_A \not\in F_A;\\
    & = & \langle s_A, \{s_N\} \rangle, \mbox{ otherwise}, \\
F_B & = & \{\langle i,j \rangle\in Q_B \mid j\cap F_N\neq \emptyset\},\\
\delta_B(\langle i,j \rangle, a) & = & \langle i',j' \rangle \mbox{,
if } \delta_A(i,a)=i'\mbox{, }\delta_N(j,a)=j'\mbox{, }a\in
\Sigma \mbox{, } i'\notin F_A; \\
& = &  \langle i',j'\cup \{s_N\} \rangle \mbox{, if }
\delta_A(i,a)=i'\mbox{, }\delta_N(j,a)=j'\mbox{, }a\in \Sigma
\mbox{, } i'\in F_A.
\end{eqnarray*}
From the above construction, we can see that all the states in $B$
starting with $i\in F_A$ must end with $j$ such that $s_N \in j$.
There are in total $2^{m-1}\cdot 2^{n-1}$ states which don't meet
this.

Thus, the number of states of the minimal DFA accepting $L_1^RL_2$
is no more than
$$2^{m+n}-2^{m-1}\cdot 2^{n-1}=\frac{3}{4}2^{m+n}.$$
\end{proof}

This result gives an upper bound for the state complexity of $L_1^R
L_2$. Next we show that this bound is reachable when $m,n \ge 2$.

\begin{theorem}
\label{L_1^R L_2 lower bound} Given two integers $m,n\geq 2$, there
exists a DFA $M$ of $m$ states and a DFA $N$ of $n$ states such that
any DFA accepting $L(M)^R L(N)$ needs at least $\frac{3}{4}2^{m+n}$
states.
\end{theorem}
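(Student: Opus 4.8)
The plan is to exhibit explicit witness DFAs $M$ (with $m$ states) and $N$ (with $n$ states) over a fixed alphabet, and then show that the DFA $B$ constructed in the proof of Theorem~\ref{L_1^R L_2 upper bound} is minimal, i.e.\ all $\frac{3}{4}2^{m+n}$ states of the form $\langle i,j\rangle$ with the constraint ``$s_M\in i \Rightarrow s_N\in j$'' are both reachable from the start state and pairwise inequivalent. Since reversal followed by the subset construction is involved, I would choose $M$ so that its reverse NFA $M'$, after subset construction, yields a DFA $A$ whose full state set $2^{Q_M}$ is reachable and whose states are distinguishable — a standard choice is a ``permutation-plus-one-extra-letter'' automaton on $Q_M=\{0,1,\dots,m-1\}$, e.g.\ one letter acting as the cycle $(0\,1\,\cdots\,m-1)$, one letter acting as a transposition or as the map collapsing/adding the element $0$, and with $F_M=\{0\}$ (one final state, so that $M'$ has one initial state and the bookkeeping about $F_A$ being exactly the subsets containing some fixed element is clean). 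For $N$ I would use the classical catenation witness on $Q_N=\{0,\dots,n-1\}$ with initial/only-final state $0$, again built from a cyclic letter and a letter that adds or removes a designated state, so that the ``second component'' $j\subseteq Q_N$ can be driven to any subset.

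The key steps, in order, would be: \textbf{(1) Reachability.} Show every $\langle i,j\rangle$ satisfying the structural constraint is reached from $s_B$. I would first drive the first coordinate to the desired $i\subseteq Q_M$ using only the letters that affect the $A$-component in a controlled way — building up $i$ one element at a time using the cycle letter together with the ``add element'' letter — then, once $i$ is fixed, use a third letter (or a letter that acts as identity on $Q_M$) to manipulate $j$ independently within $2^{Q_N}$, respecting that whenever the first coordinate lands in $F_A$ we are forced to have $s_N\in j$, which is exactly the allowed region. A careful ordering of ``first fix $i$, then adjust $j$'' is essential, and one typically needs an auxiliary letter that is the identity on one component; if the two-letter alphabets of the standard witnesses collide, I would take the alphabet of $M$ and $N$ to be (essentially) disjoint copies augmented so each letter acts as identity on the other machine, at the cost of a slightly larger alphabet. \textbf{(2) Distinguishability.} Show any two distinct allowed states $\langle i_1,j_1\rangle \neq \langle i_2,j_2\rangle$ are inequivalent. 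If $j_1\neq j_2$, pick an element in the symmetric difference, say $q\in Q_N$, and use a word that drives $N$ from $q$ into $F_N$ while (crucially) not re-injecting $s_N$ into the surviving set in a way that masks the difference; here the structure of $N$'s transitions — a cyclic letter lets us rotate any chosen state to the final state $n-1$ or $0$ — does the work. If $j_1=j_2$ but $i_1\neq i_2$, I use a word that, read by $A$, eventually forces the first coordinate into $F_A$ for exactly one of the two states, thereby injecting $s_N$ into one second coordinate but not the other, reducing to the previous case; this is where distinguishability of states of $A$ (equivalently, the reversal-witness property of $M$) gets used.

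The main obstacle I anticipate is the \textbf{interaction} between the two coordinates created by the feedback rule ``$i'\in F_A \Rightarrow$ add $s_N$ to $j'$'': it both \emph{restricts} which $\langle i,j\rangle$ are legal states (handled already in Theorem~\ref{L_1^R L_2 upper bound}) and \emph{complicates} reachability and distinguishability arguments, because I cannot freely move the $A$-coordinate without possibly perturbing the $N$-coordinate. The witnesses must be engineered so that there is a ``safe'' way to navigate — typically by ensuring there is a letter under which the $A$-coordinate stays outside $F_A$ while the $N$-coordinate is freely manipulated, and conversely a letter that toggles membership of $s_M$ in the $A$-coordinate on demand. Verifying that these navigation moves exist for \emph{all} target states, uniformly in $m$ and $n$, and that the boundary cases (states on the ``$s_N\in j$ forced'' region versus the free region) are still mutually distinguishable, is the delicate part; the rest is the routine but lengthy subset-construction bookkeeping, which I would organize as two lemmas (one for reachability, one for pairwise inequivalence) and then combine with Theorem~\ref{L_1^R L_2 upper bound} to conclude the exact bound $\frac{3}{4}2^{m+n}$.
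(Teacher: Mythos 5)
Your overall strategy is the same as the paper's (explicit witnesses plus a proof that the automaton $B$ from Theorem~\ref{L_1^R L_2 upper bound} is minimal via reachability and pairwise distinguishability), but as written the proposal has a genuine gap: the witnesses are never pinned down and the verification is never carried out, and the one concrete navigation scheme you do commit to would not work. You propose to ``first fix $i$, then adjust $j$ independently'' using letters that act as the identity on $Q_M$. Notice, however, that in the construction of $B$ the second coordinate can only ever \emph{gain} elements through the forced injection of $s_N$ when the first coordinate lies in $F_A$: if you freeze $i$ outside $F_A$ and start from $j=\emptyset$ (which is where the computation starts, since $s_B=\langle s_A,\emptyset\rangle$ when $s_A\notin F_A$), then $j$ remains empty forever, because $\delta_N(\emptyset,a)=\emptyset$; and if you freeze $i$ inside $F_A$, every step re-adds $s_N$, so with just a cycle letter and a reset letter on $N$ you reach only a restricted family of subsets (essentially intervals containing $s_N$), not all of $2^{Q_N}$. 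So ``fix $i$, then manipulate $j$'' cannot reach all $\frac{3}{4}2^{m+n}$ legal states, which is precisely the delicate point you flag but leave unresolved.

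The paper's proof resolves it by interleaving the two coordinates rather than separating them: its letter $a$ cycles $Q_M$ and is (essentially) the identity on $Q_N$, so the block $a^m$ returns the first coordinate to itself while passing through $F_A$ exactly once, thereby injecting $s_N$ into the second coordinate exactly once; its letter $d$ is the identity on $Q_M$ and cycles $Q_N$, shifting the second coordinate between injections. Arbitrary subsets $q\subseteq Q_N$ are then built by words of the form $a^m d^{q_l-q_{l-1}}a^m\cdots a^m d^{q_1}$, the first coordinate being assembled before or after with $a,b,c$ (cycle, a non-injective collapse, and a transposition on $\{m-2,m-1\}$), and the states with $s_M$ in the first coordinate are reached by arranging the \emph{last} letter to bring $s_M$ in. Your plan lacks this injection-and-shift mechanism (or any substitute for it), as well as the explicit transition tables, reachability words, and distinguishing words (the paper uses $d^{n-1-x}$ and $a^y c^2 d^n$), so the lower bound is not actually established by the proposal as it stands.
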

\begin{proof}
Let $M=(Q_M,\Sigma , \delta_M , 0, \{m-1\} )$ be a DFA, shown in
Figure~\ref{DFAM-rev-cat}, where $Q_M = \{0,1,\ldots ,m-1\}$,
$\Sigma = \{a,b,c,d\}$, and the transitions are given as:
\begin{itemize}
\item $\delta_M(i, a) = i+1 \mbox{ mod }m \mbox{, } i=0, \ldots , m-1,$
\item $\delta_M(i, b) = i \mbox{, } i=0, \ldots , m-2,$ $\delta_M(m-1, b) = m-2 \mbox{, }$
\item $\delta_M(m-2, c) = m-1 \mbox{, }$ $\delta_M(m-1, c) = m-2 \mbox{,}$\\
 if $m\geq 3$, $\delta_M(i, c) = i \mbox{, } i=0, \ldots , m-3,$
\item $\delta_M(i, d) = i \mbox{, } i=0, \ldots , m-1,$
\end{itemize}
\begin{figure}[ht]
  \begin{center}
  \includegraphics[scale=0.17]{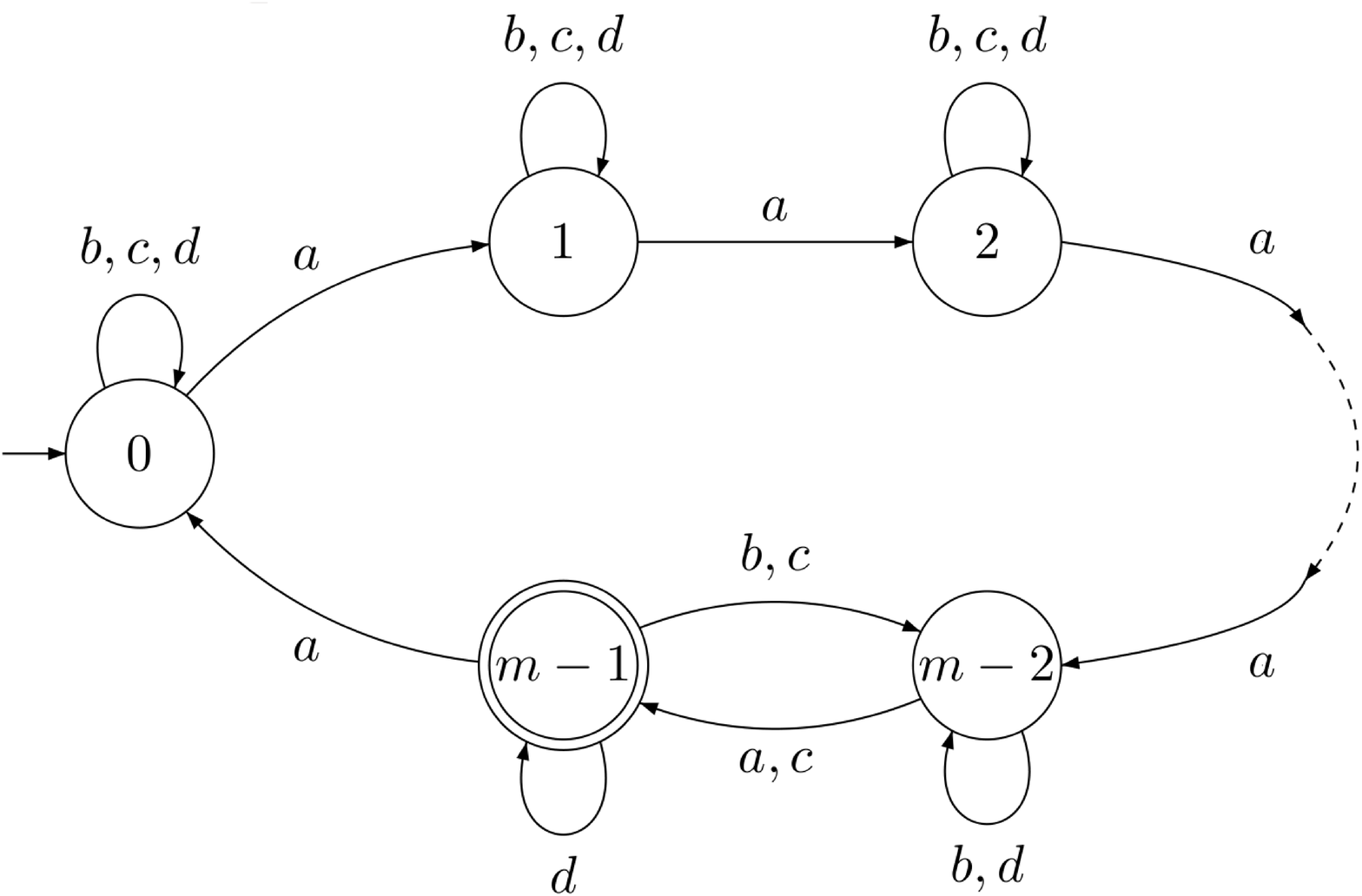}
  \end{center}
  \caption{Witness DFA $M$ of Theorem~\ref{L_1^R L_2 lower bound} showing that the upper bound in Theorem~\ref{L_1^R L_2 upper bound} is reachable when $m, n \ge 2$}
\label{DFAM-rev-cat}
\end{figure}

Let $N=(Q_N,\Sigma , \delta_N , 0, \{n-1\} )$ be a DFA, shown in
Figure~\ref{DFAN-rev-cat}, where $Q_N = \{0,1,\ldots ,n-1\}$,
$\Sigma = \{a,b,c,d\}$, and the transitions are given as:
\begin{itemize}
\item $\delta_N(i, a) = i \mbox{, } i=1, \ldots , n-1,$
\item $\delta_N(i, b) = i \mbox{, } i=1, \ldots , n-1,$
\item $\delta_N(i, c) = 0 \mbox{, } i=1, \ldots , n-1,$
\item $\delta_N(i, d) = i+1 \mbox{ mod }n \mbox{, } i=0, \ldots , n-1,$
\end{itemize}
\begin{figure}[ht]
  \begin{center}
  \includegraphics[scale=0.17]{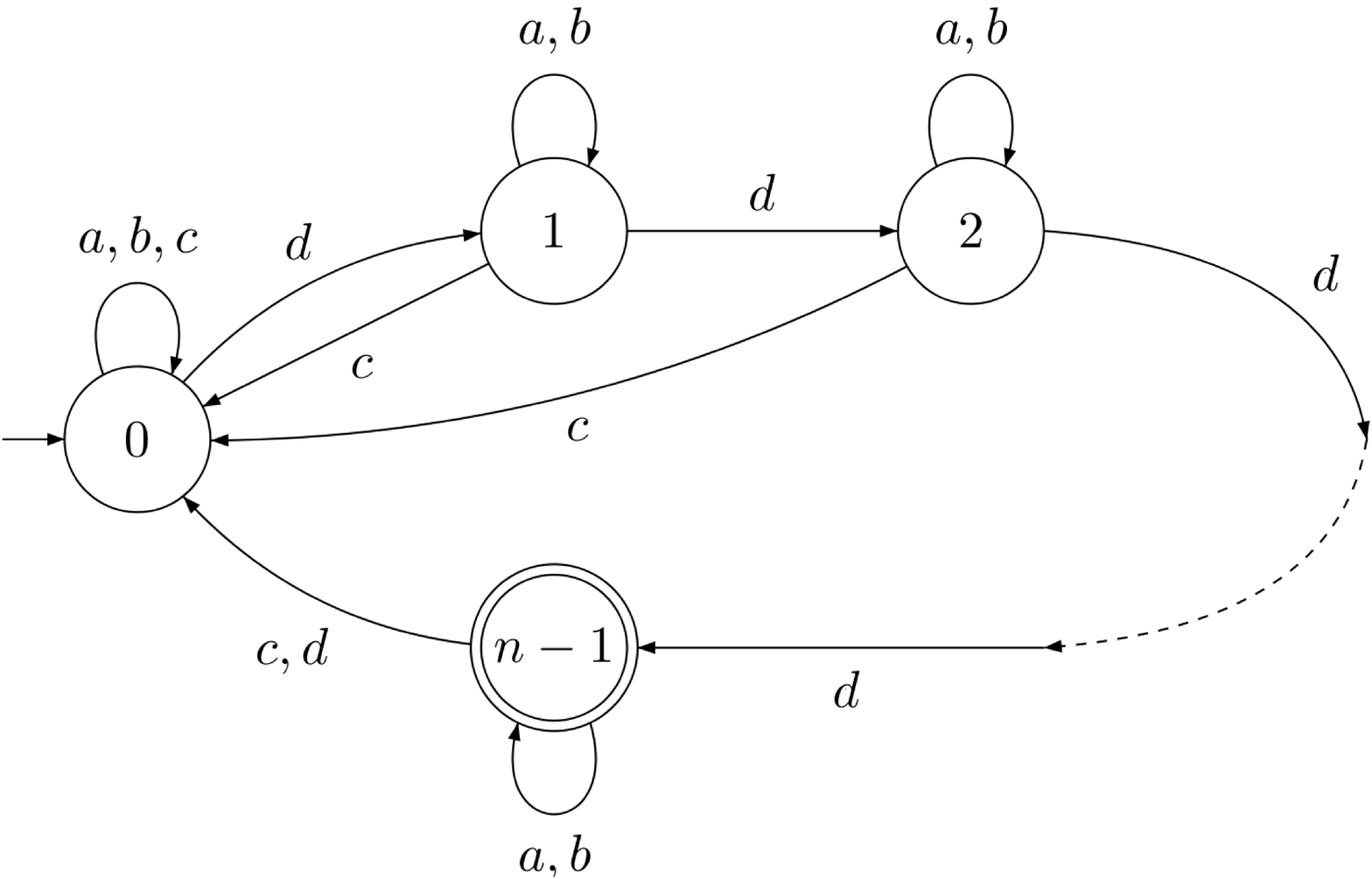}
  \end{center}
  \caption{Witness DFA $N$ of Theorem~\ref{L_1^R L_2 lower bound} showing that the upper bound in Theorem~\ref{L_1^R L_2 upper bound} is reachable when $m, n \ge 2$}
\label{DFAN-rev-cat}
\end{figure}
Now we design a DFA $A=(Q_A, \Sigma , \delta_A , \{m-1\}, F_A )$,
where $Q_A = \{q \mid q\subseteq Q_M\}$, $\Sigma = \{a,b,c,d\}$,
$F_A = \{q \mid 0\in q \mbox{, }q\in Q_A\}$, and the transitions are
defined as:
\[
\delta_A(p, e) = \{j \mid \delta_M(j, e)=i\mbox{, }i\in p\} \mbox{,
} p\in Q_A\mbox{, } e \in \Sigma.
\]
It is easy to see that $A$ is a DFA that accepts $L(M)^R$. We prove
that $A$ is minimal before using it.

(I) We first show that every state $I\in Q_A$, is reachable from
$\{m-1\}$. There are three cases.

\begin{itemize}
\item[{\rm 1.}]$|I|=0$.
$|I|=0$ if and only if $I=\emptyset$. $\delta_A(\{ m-1 \}, b) =
I=\emptyset.$
\item[{\rm 2.}]$|I|=1$.
Let $I=\{ i \}$, $0\leq i\leq m-1$. $\delta_A(\{ m-1 \}, a^{m-1-i})
=I.$
\item[{\rm 3.}]$2\leq |I|\leq m$.
Let $I=\{ i_1, i_2, \ldots ,i_k \}$, $0\leq i_1<i_2< \ldots <i_k
\leq m-1$, $2\leq k\leq m$. $\delta_A(\{ m-1 \}, w) = I$, where
$$w = ab(ac)^{i_2-i_1-1}ab(ac)^{i_{3}-i_{2}-1}\cdots ab(ac)^{i_k-i_{k-1}-1}a^{m-1-i_k}.$$
\end{itemize}

(II) Any two different states $I$ and $J$ in $Q_A$ are
distinguishable.

Without loss of generality, we may assume that $|I|\geq |J|$. Let
$x\in I-J$. Then a string $a^{x}$ can distinguish these two states
because
\begin{eqnarray*}
\delta_A(I, a^{x})& \in & F_A,\\
\delta_A(J, a^{x}) & \notin & F_A.
\end{eqnarray*}

Due to (I) and (II), $A$ is a minimal DFA with $2^m$ states which
accepts $L(M)^R$. Now let $B=(Q_B, \Sigma , \delta_B , s_B, F_A\}$
be another DFA, where
\begin{eqnarray*}
Q_B & = & \{\langle p,q\rangle  \mid p\in Q_A-F_A \mbox{, }q\subseteq Q_N\}\\
& & \qquad \cup \,\, \{\langle p', q'\rangle \mid p'\in F_A \mbox{, }q'\subseteq Q_N\mbox{, }0\in q'\},\\
\Sigma & = & \{a,b,c,d\},\\
s_B & = & \langle \{m-1\},\emptyset \rangle,\\
F_B & = & \{\langle p,q\rangle \mid n-1\in q \mbox{, } \langle
p,q\rangle\in Q_B\},
\end{eqnarray*}
and for each state $\langle p,q\rangle\in Q_B$ and each letter $e\in
\Sigma,$

\begin{eqnarray*}
\delta_B(\langle p,q\rangle, e) = \left\{
\begin{array}{l l}
  \langle p',q'\rangle & \mbox{if }\delta_A(p, e)=p'\notin F_A\mbox{, } \delta_N(q, e)=q',\\
  \langle p',q'\rangle &  \mbox{if }\delta_A(p, e)=p'\in F_A\mbox{, }\delta_N(q, e)=r'\mbox{, $q' =r'\cup \{0\}$.}  \\
%   & \qquad \delta_N(q, e)=r'\mbox{, $q' =r'\cup \{0\}$,}\\
\end{array} \right.
\end{eqnarray*}
As we mentioned in last proof, all the states starting with $p\in
F_A$ must end with $q\subseteq Q_N$ such that $0\in q$. Clearly, $B$
accepts the language $L(M)^RL(N)$ and it has
$$2^m\cdot 2^n-2^{m-1}\cdot 2^{n-1}=\frac{3}{4}2^{m+n}$$
states. Now we show that $B$ is a minimal DFA.

(I) Every state $\langle p,q\rangle \in Q_B$ is reachable. We
consider the following five cases:

\begin{itemize}

\item[{\rm 1.}]$p=\emptyset$, $q=\emptyset$.
$\langle \emptyset,\emptyset\rangle$ is the sink state of $B$.
$\delta_B(\langle \{m-1\},\emptyset \rangle, b) = \langle
p,q\rangle$.

\item[{\rm 2.}]$p\neq \emptyset$, $q=\emptyset$.
Let $p=\{ p_1, p_2, \ldots ,p_k \}$, $1\leq p_1<p_2< \ldots <p_k
\leq m-1$, $1\leq k\leq m-1$. Note that $0\notin p$, because $0\in
p$ guarantees $0\in q$. $\delta_B(\langle \{m-1\},\emptyset \rangle,
w) = \langle p,q\rangle$, where
$$w = ab(ac)^{p_2-p_1-1}ab(ac)^{p_{3}-p_{2}-1}\cdots ab(ac)^{p_k-p_{k-1}-1}a^{m-1-p_k}.$$
Please note that $w=a^{m-1-p_1}$ when $k=1$.

\item[{\rm 3.}]$p= \emptyset$, $q\neq \emptyset$.
In this case, let $q=\{ q_1, q_2, \ldots ,q_l \}$, $0\leq q_1<q_2<
\ldots <q_l \leq n-1$, $1\leq l\leq n$. $\delta_B(\langle
\{m-1\},\emptyset \rangle, x) = \langle p,q\rangle$, where
$$x = a^md^{q_l-q_{l-1}}a^md^{q_{l-1}-q_{l-2}}\cdots a^md^{q_2-q_1}a^md^{q_1}b.$$

\item[{\rm 4.}]$p\neq \emptyset$, $0\notin p$, $q\neq \emptyset$.
Let $p=\{ p_1, p_2, \ldots ,p_k \}$, $1\leq p_1<p_2< \ldots <p_k
\leq m-1$, $1\leq k\leq m-1$ and $q=\{ q_1, q_2, \ldots ,q_l \}$,
$0\leq q_1<q_2< \ldots <q_l \leq n-1$, $1\leq l\leq n$. We can find
a string $uv$ such that $\delta_B(\langle \{m-1\},\emptyset \rangle,
uv) = \langle p,q\rangle$, where
$$u = ab(ac)^{p_2-p_1-1}ab(ac)^{p_{3}-p_{2}-1}\cdots ab(ac)^{p_k-p_{k-1}-1}a^{m-1-p_k},$$
$$v = a^md^{q_l-q_{l-1}}a^md^{q_{l-1}-q_{l-2}}\cdots a^md^{q_2-q_1}a^md^{q_1}.$$

\item[{\rm 5.}]$p\neq \emptyset$, $0\in p$, $m-1\notin p$, $q\neq \emptyset$.
Let $p=\{ p_1, p_2, \ldots ,p_k \}$, $0= p_1<p_2< \ldots <p_k <m-1$,
$1\leq k\leq m-1$ and $q=\{ q_1, q_2, \ldots ,q_l \}$, $0= q_1<q_2<
\ldots <q_l \leq n-1$, $1\leq l\leq n$. Since $0$ is in $p$,
according to the definition of $B$, $0$ has to be in $q$ as well.
There exists a string $u'v'$ such that $\delta_B(\langle
\{m-1\},\emptyset \rangle, u'v') = \langle p,q\rangle$, where
$$u' = ab(ac)^{p_2-p_1-1}ab(ac)^{p_{3}-p_{2}-1}\cdots ab(ac)^{p_k-p_{k-1}-1}a^{m-2-p_k},$$
$$v' = a^md^{q_l-q_{l-1}}a^md^{q_{l-1}-q_{l-2}}\cdots a^md^{q_2-q_1}a^md^{q_1}a.$$

\item[{\rm 6.}]$p\neq \emptyset$, $\{0,m-1\}\subseteq p$, $q\neq \emptyset$.
Let $p=\{ p_1, p_2, \ldots ,p_k \}$, $0= p_1<p_2< \ldots <p_k =m-1$,
$2\leq k\leq m$ and $q=\{ q_1, q_2, \ldots ,q_l \}$, $0= q_1<q_2<
\ldots <q_l \leq n-1$, $1\leq l\leq n$. In this case, we have
\begin{eqnarray*}
\langle p,q\rangle = \left\{
\begin{array}{l l}
  \delta_B(\langle \{0,1,p_2+1,\ldots,p_{k-1}+1\} , q \rangle, a), & \mbox{if }m-2\notin p,\\
  \delta_B(\langle p-\{m-1\},q\rangle, b), & \mbox{if }m-2\in p,
\end{array} \right.
\end{eqnarray*}
where states $\langle \{0,1,p_2+1,\ldots,p_{k-1}+1\} , q \rangle$
and $\langle p-\{m-1\},q\rangle$ have been proved to be reachable in
Case 5.

%\item[{\rm 6.}]$p\neq \emptyset$, $0$, $m-1\in p$, $m-2\notin p$, $q\neq \emptyset$.
%Let $p=\{ p_1, p_2, \ldots ,p_k \}$, $0= p_1<p_2< \ldots <p_k =m-1$,
%$2\leq k\leq m$ and $q=\{ q_1, q_2, \ldots ,q_l \}$, $0= q_1<q_2<
%\ldots <q_l \leq n-1$, $1\leq l\leq n$. In this case, $p_{k-1}\neq
%m-2$ and we have
%\[
%\langle \{ 0, p_2, \ldots, p_{k-1}, m-1\},q\rangle = \delta_B
%(\langle \{0,1,p_2+1,\ldots,p_{k-1}+1\} , q \rangle, a),
%\]
%where state $\langle \{0,1,p_2+1,\ldots,p_{k-1}+1\} , q \rangle$ has
%been included in Case 5.

%\item[{\rm 7.}]$p\neq \emptyset$, $0, m-2, m-1\in p$, $q\neq \emptyset$.
%Let $p=\{ p_1, p_2, \ldots ,p_k \}$, $0= p_1<p_2< \ldots <p_k =m-1$,
%$3\leq k\leq m$ and $q=\{ q_1, q_2, \ldots ,q_l \}$, $0= q_1<q_2<
%\ldots <q_l \leq n-1$, $1\leq l\leq n$. $B$ can reach state $\langle
%p,q\rangle$ from state $\langle p-\{m-1\},q\rangle$ by reading a
%letter $b$. We have proved that state $\langle p-\{m-1\},q\rangle$
%is reachable from $\langle \{m-1\},\emptyset \rangle$ in Case 5.

\end{itemize}

(II) We then show that any two different states $\langle
p_1,q_1\rangle$ and $\langle p_2,q_2\rangle$ in $Q_B$ are
distinguishable.

\begin{itemize}

\item[{\rm 1.}]$q_1\neq q_2$.
Without loss of generality, we may assume that $|q_1|\geq |q_2|$. Let $x\in q_1-q_2$. A string $d^{n-1-x}$ can distinguish them because
\begin{eqnarray*}
\delta_B(\langle p_1,q_1\rangle, d^{n-1-x})& \in & F_B,\\
\delta_B(\langle p_2,q_2\rangle, d^{n-1-x}) & \notin & F_B.
\end{eqnarray*}

\item[{\rm 2.}]$p_1\neq p_2$, $q_1= q_2$. Without loss of generality, we assume that $|p_1|\geq |p_2|$. Let $y\in p_1-p_2$. Then there always exists a string $a^yc^2d^{n}$ such that
\begin{eqnarray*}
\delta_B(\langle p_1,q_1\rangle, a^yc^2d^{n})& \in & F_B,\\
\delta_B(\langle p_2,q_2\rangle, a^yc^2d^{n}) & \notin & F_B.
\end{eqnarray*}

\end{itemize}
Since all the states in $B$ are reachable and pairwise
distinguishable, DFA $B$ is minimal. Thus, any DFA accepting
$L(M))^RL(N)$ needs at least $\frac{3}{4}2^{m+n}$ states.
\end{proof}

This result gives a lower bound for the state complexity of
$L_1^RL_2$ when $m,n \ge 2$. It coincides with the upper bound shown
in Theorem~\ref{L_1^R L_2 upper bound} exactly. Thus, we obtain the
state complexity of the combined operation $L_1^RL_2$ for $m \ge 2$
and $n\ge 2$.
\begin{theorem}
\label{L_1^R L_2 state complexity} For any integers $m, n\geq 2$,
let $L_1$ be an $m$-state DFA language and $L_2$ be an $n$-state DFA
language. Then $\frac{3}{4}2^{m+n}$ states are both necessary and
sufficient in the worst case for a DFA to accept $L_1^RL_2$.
\end{theorem}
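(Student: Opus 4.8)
The plan is simply to assemble the two bounds already obtained. On the one hand, Theorem~\ref{L_1^R L_2 upper bound} shows that for all integers $m,n\ge 1$ --- hence in particular for $m,n\ge 2$ --- every language of the form $L_1^R L_2$, with $L_1$ an $m$-state DFA language and $L_2$ an $n$-state DFA language, is accepted by some DFA with at most $\frac{3}{4}2^{m+n}$ states; this gives sufficiency. On the other hand, Theorem~\ref{L_1^R L_2 lower bound} supplies, for each fixed pair $m,n\ge 2$, explicit witness DFAs $M$ (with $m$ states) and $N$ (with $n$ states) over $\Sigma=\{a,b,c,d\}$ for which the minimal DFA of $L(M)^R L(N)$ has exactly $\frac{3}{4}2^{m+n}$ states; taking $L_1=L(M)$ and $L_2=L(N)$ yields the matching necessity. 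Combining the two gives the claimed exact worst-case state complexity.

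Because the statement is a direct corollary of the two preceding theorems, there is no genuine obstacle in the argument itself; the substance has already been discharged in Theorems~\ref{L_1^R L_2 upper bound} and~\ref{L_1^R L_2 lower bound}. The only point worth emphasising is the role of the phrase \emph{worst case}: for special operands --- for instance when $L_1$ is finite, or when the subset automaton for $L_1^R$ fails to realise the maximal number of final states --- far fewer states may suffice, which is exactly why the degenerate cases $m=1$ and $n=1$ are treated separately in the remaining theorems of this section. If a self-contained proof were wanted, one would re-run the reachability analysis of the product automaton $B$ from Theorem~\ref{L_1^R L_2 lower bound}, in particular checking that none of the $2^{m-1}\cdot 2^{n-1}$ forbidden pairs $\langle p,q\rangle$ with $0\in p$ and $0\notin q$ is reachable, so that the count $\frac{3}{4}2^{m+n}$ is exact, and then verifying the two families of distinguishing words $d^{\,n-1-x}$ and $a^{y}c^{2}d^{\,n}$; that bookkeeping, rather than any conceptual difficulty, is where the work lies.
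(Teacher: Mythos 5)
Your proposal is correct and matches the paper's own treatment: the theorem is obtained by combining the upper bound of Theorem~\ref{L_1^R L_2 upper bound} with the matching lower bound witnessed in Theorem~\ref{L_1^R L_2 lower bound}, exactly as the paper does. No further argument is needed.
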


In the rest of this section, we study the remaining cases when
either $m =1$ or $n=1$.

We first consider the case when $m=1$ and $n\geq 2$. In this case,
$L_1=\emptyset$ or $L_1=\Sigma^*$. $L_1^RL_2=L_1L_2$ holds no matter
$L_1$ is $\emptyset$ or $\Sigma^*$, since $\emptyset ^R=\emptyset$
and $(\Sigma^*)^R=\Sigma^*$. It has been shown in~\cite{YuZhSa94}
that $2^{n-1}$ states are both sufficient and necessary in the worst
case for a DFA to accept the catenation of a 1-state DFA language
and an $n$-state DFA language, $n\ge 2$.

When $m =1$ and $n=1$, it is also easy to see that $1$ state is
sufficient and necessary in the worst case for a DFA to accept
$L_1^RL_2$, because $L_1^RL_2$ is either $\emptyset$ or $\Sigma^*$.
Thus, we have the following theorem concerning the state complexity
of $L_1^RL_2$ for $m=1$ and $n\ge 1$.

\begin{theorem}
\label{L_1^R L_2 state complexity m=1 n>=1} Let $L_1$ be a 1-state
DFA language and $L_2$ be an $n$-state DFA language, $n\ge 1$. Then
$2^{n-1}$ states are both sufficient and necessary in the worst case
for a DFA to accept $L_1^RL_2$.
\end{theorem}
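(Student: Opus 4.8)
The plan is to reduce the statement to the state complexity of catenation with a one-state left operand, which was recalled just above the theorem. First, since $m=1$, the language $L_1$ is either $\emptyset$ or $\Sigma^*$; because $\emptyset^R=\emptyset$ and $(\Sigma^*)^R=\Sigma^*$, in both cases $L_1^RL_2=L_1L_2$, which equals $\emptyset$ or $\Sigma^*L_2$. The empty language is accepted by a single state and $1\le 2^{n-1}$, so the whole problem reduces to showing that $sc(\Sigma^*L_2)\le 2^{n-1}$ for every $n$-state DFA language $L_2$ and that equality is attained for some $L_2$. When $n=1$ we also have $L_2\in\{\emptyset,\Sigma^*\}$, so $\Sigma^*L_2\in\{\emptyset,\Sigma^*\}$ and $1=2^{0}$ states are necessary and sufficient; from here on I assume $n\ge 2$.

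For the upper bound I would fix $N=(Q_N,\Sigma,\delta_N,s_N,F_N)$ accepting $L_2$ and build a DFA $D$ whose states are the subsets $S\subseteq Q_N$ with $s_N\in S$, with initial state $\{s_N\}$, transitions $\delta_D(S,a)=\delta_N(S,a)\cup\{s_N\}$, and final states those $S$ with $S\cap F_N\ne\emptyset$. A short induction on $|w|$ shows $\delta_D(\{s_N\},w)=\{s_N\}\cup\{\delta_N(s_N,u)\mid u \text{ a nonempty suffix of } w\}$, hence $w\in L(D)$ iff some suffix of $w$ (possibly $\varepsilon$) lies in $L_2$, i.e. iff $w\in\Sigma^*L_2$. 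Since $D$ is complete and has exactly $2^{n-1}$ states, $sc(\Sigma^*L_2)\le 2^{n-1}$, and therefore $sc(L_1^RL_2)\le 2^{n-1}$ for every one-state $L_1$.

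For the lower bound I would take $L_1=\Sigma^*$, accepted by the one-state DFA whose single state is final, and for $L_2$ the witness DFA $N$ underlying the catenation lower bound of~\cite{YuZhSa94} with a one-state left operand: over $\Sigma=\{a,b\}$ with $Q_N=\{0,\dots,n-1\}$, $s_N=0$, $F_N=\{n-1\}$, with $a$ acting as the cyclic shift $i\mapsto i+1\bmod n$ and $b$ chosen so that the automaton $D$ above reaches all $2^{n-1}$ subsets of $Q_N$ containing $0$. The two things to check are then: (i) reachability, namely that every such subset is reached from $\{0\}$ in $D$, inserting new states along the $a$-cycle and deleting states with $b$; and (ii) distinguishability, namely that for distinct reachable subsets $S_1\ne S_2$, choosing $x$ in their symmetric difference, a word built from $a$'s and $b$'s drives exactly one of $S_1,S_2$ into a set meeting $F_N=\{n-1\}$. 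Given (i) and (ii), $D$ is minimal with $2^{n-1}$ states, so $sc(\Sigma^*L_2)\ge 2^{n-1}$, matching the upper bound and completing the proof.

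Since the theorem is essentially a corollary of the already-quoted catenation result, no step is genuinely difficult. The two points that need care are: that the correct subset construction for $\Sigma^*L_2$ must re-insert $s_N$ after every transition (a naive NFA that merely puts a self-loop on $s_N$ computes the wrong language once $N$ has transitions leading back into $s_N$), which is exactly what keeps the state count at $2^{n-1}$ rather than $2^n$; and the reachability and distinguishability bookkeeping for the chosen witness, which is routine and already implicit in~\cite{YuZhSa94}.
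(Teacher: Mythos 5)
Your proposal is correct and follows essentially the same route as the paper: observe that $m=1$ forces $L_1\in\{\emptyset,\Sigma^*\}$, hence $L_1^RL_2=L_1L_2$, handle $n=1$ trivially, and invoke the known $2^{n-1}$ bound for catenation with a one-state left operand from~\cite{YuZhSa94}. The only difference is that you additionally sketch a self-contained subset construction and witness for that cited catenation result, which the paper simply quotes.
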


Now, we study the state complexity of $L_1^RL_2$ for $m \ge 2$ and
$n = 1$. Let us start with the following upper bound.

\begin{theorem}
\label{L_1^R L_2 upper bound m>=2 n=1} For any integer $m\ge 2$, let
$L_1$ and $L_2$ be two regular languages accepted by an $m$-state
DFA and a $1$-state DFA, respectively. Then there exists a DFA of at
most $2^{m-1}+1$ states that accepts $L_1^R L_2$.
\end{theorem}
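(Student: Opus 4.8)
The plan is to analyze the structure of $L_1^R L_2$ when $L_2$ is accepted by a one-state DFA. A one-state complete DFA $N=(\{0\},\Sigma,\delta_N,0,F_N)$ has either $F_N=\emptyset$, in which case $L_2=\emptyset$ and $L_1^RL_2=\emptyset$ is accepted by a $1$-state DFA, or $F_N=\{0\}$, in which case $L_2=\Sigma^*$. So the only interesting case is $L_2=\Sigma^*$, and then $L_1^RL_2 = L_1^R\Sigma^*$, the set of all words having some prefix in $L_1^R$. First I would take the $m$-state minimal DFA $M$ for $L_1$, reverse it to get an NFA $M'$ with (at most) $m$ states whose accepted language is $L_1^R$, and then build an NFA for $L_1^R\Sigma^*$ by adding a self-loop on every letter at each final state of $M'$ (equivalently, redirecting so that once a final state is entered the computation can stay accepting forever). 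Determinizing this NFA by subset construction gives at most $2^m$ states; the key observation is that every reachable subset either contains a final state of $M'$ or does not, and all subsets containing a final state are equivalent to one another (they all accept $\Sigma^*$ from that point), so they collapse to a single accepting sink state. Since $M'$ has at least one final state (namely the original initial state $s_M$ of $M$, which is final in $M'$), the number of "non-final" subsets of $Q_M$ is at most $2^{m-1}$, and adding the one sink state gives the bound $2^{m-1}+1$.

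The key steps, in order, are: (1) dispose of the case $F_N=\emptyset$ trivially; (2) reduce to $L_1^R\Sigma^*$; (3) construct the NFA $M'$ for $L_1^R$ with self-loops added at final states to account for the trailing $\Sigma^*$ — here one must be slightly careful because $M'$ has a \emph{set} of initial states $F_M$, so the corresponding NFA is genuinely nondeterministic and the initial subset is $F_M$ itself (or the merged sink, if $F_M$ already meets the final states of $M'$, i.e. if $s_M\in F_M$); (4) run the subset construction and argue that any subset meeting the final-state set of $M'$ accepts $\Sigma^*$ and hence is equivalent to a single sink; (5) count: the subsets of $Q_M$ that avoid all final states of $M'$ number at most $2^{m-|F_M|}\le 2^{m-1}$ since $s_M$ is final in $M'$, so together with the sink we get at most $2^{m-1}+1$ reachable, pairwise-relevant states, and a fortiori the minimal DFA has at most that many states.

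I expect the main obstacle to be bookkeeping around the degenerate subcases rather than anything deep: making sure the $s_M\in F_M$ case (where the initial state of the constructed DFA is already the sink, so $L_1^R\Sigma^*=\Sigma^*$ and one state suffices) is consistent with the stated bound, and making sure that when $F_M$ contains several states the "non-final subsets" count is correctly $2^{m-|F_M|}$, which is maximized at $2^{m-1}$ precisely when $|F_M|=1$. One should also note that this is only an upper bound statement, so it suffices to exhibit \emph{some} DFA of the claimed size; no minimality or reachability argument is required here, which keeps the proof short. A matching lower bound (witness language forcing $2^{m-1}+1$ states) would be handled in the subsequent theorem, so I would not address it here.
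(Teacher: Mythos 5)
Your proposal is correct and takes essentially the same route as the paper: reduce to the case $L_2=\Sigma^*$, reverse $M$ into an NFA $M'$ with initial states $F_M$ and sole final state $s_M$, determinize, and merge every subset containing $s_M$ into a single accepting sink, which yields $2^{m-1}+1$ states. The only slip is your count $2^{m-|F_M|}$ of the non-accepting subsets---since $M'$ has the single final state $s_M$, that count is exactly $2^{m-1}$ regardless of $|F_M|$---but this does not affect the claimed upper bound.
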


\begin{proof}
Let $M=(Q_M,\Sigma , \delta_M , s_M, F_M)$ be a DFA of $m$ states,
$m\ge 2$, $k_1$ final states and $L_1=L(M)$. Let $N$ be another DFA
of $1$ state and $L_2=L(N)$. Since $N$ is a complete DFA, as we
mentioned before, $L(N)$ is either $\emptyset$ or $\Sigma^*$.
Clearly, $L_1^R\cdot \emptyset=\emptyset$. Thus, we need to consider
only the case $L_2=L(N)=\Sigma^*$.

We construct an NFA $M'=(Q_M,\Sigma , \delta_{M'} , F_M, \{s_M\})$
with $k_1$ initial states which is similar to the proof of
Theorem~\ref{L_1^R L_2 upper bound}. $\delta_{M'}(p,a)=q$ if
$\delta_M(q,a)=p$ where $a\in \Sigma$ and $p,q\in Q_M$. It is easy
to see that
$$L(M')=L(M)^R=L_1^R.$$

By performing subset construction on NFA $M'$, we get an equivalent,
$2^m$-state DFA $A=(Q_A,\Sigma , \delta_A , s_A, F_A)$ such that
$L(A)=L_1^R$. $F_A=\{i\mid i\subseteq Q_M, s_M\in i\}$ because $M'$
has only one final state $s_M$. Thus, $A$ has $2^{m-1}$ final states in total.

Define $B=(Q_B,\Sigma , \delta_B , s_B, \{f_B\})$ where $f_B\notin
Q_A$, $Q_B=(Q_A-F_A)\cup \{f_B\}$,
\begin{eqnarray*}
s_B = \left\{
\begin{array}{l l}
  s_A & \mbox{if }s_A\notin F_A,\\
  f_B & \mbox{otherwise.}\\
\end{array} \right.
\end{eqnarray*}
%$s_B=s_A$ if $s_A\notin F_A$, $s_B=f_B$ otherwise.
and for any $a\in \Sigma$ and $p\in Q_B$,
\begin{eqnarray*}
\delta_B(p, a) = \left\{
\begin{array}{l l}
  \delta_A(p, a) & \mbox{if }\delta_A(p, a)\notin F_A,\\
  f_B & \mbox{if }\delta_A(p, a)\in F_A,\\
  f_B & \mbox{if }p=f_B.\\
\end{array} \right.
\end{eqnarray*}
The automaton $B$ is exactly the same as $A$ except that $A$'s
$2^{m-1}$ final states are made to be sink states and these sink,
final states are merged into one, since they are equivalent. When
the computation reaches the final state $f_B$, it remains there.
Now, it is clear that $B$ has
$$2^m-2^{m-1}+1=2^{m-1}+1$$ states and $L(B)=L_1^R \Sigma^*$.
\end{proof}
This theorem shows an upper bound for the state complexity of $L_1^R
L_2$ for $m \ge 2$ and $n = 1$. Next we prove that this upper bound
is reachable.

\begin{lemma}
\label{L_1^R L_2 lower bound m=2 or 3 n=1}Given an integer $m=2$ or
$3$, there exists an $m$-state DFA $M$ and a $1$-state DFA $N$ such
that any DFA accepting $L(M)^RL(N)$ needs at least $2^{m-1}+1$
states.
\end{lemma}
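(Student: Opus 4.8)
The plan is to exhibit explicit witness DFAs for the two tiny cases $m=2$ and $m=3$ separately, since for such small $m$ the combinatorial freedom is limited and one can hand-tune the transition functions. In both cases I take $N$ to be the one-state DFA accepting $\Sigma^*$, so that $L(M)^R L(N) = L(M)^R \Sigma^*$, and the target automaton is the $B$ constructed in the proof of Theorem~\ref{L_1^R L_2 upper bound m>=2 n=1}: it has $2^{m-1}+1$ states, namely the non-final subsets of $Q_M$ together with a single absorbing final state $f_B$ that collapses all subsets containing $s_M$. So for $m=2$ the target has $3$ states and for $m=3$ it has $5$ states. The burden is to pick $M$ so that this $B$ is already minimal, i.e. all $2^{m-1}+1$ states are reachable from $s_B$ and pairwise inequivalent.

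First I would write down the candidate $M$. For $m=2$, a natural choice is $Q_M=\{0,1\}$ with $s_M=0$, $F_M=\{1\}$, and a letter $a$ with $\delta_M(0,a)=1$, $\delta_M(1,a)=0$ (plus possibly a second letter acting as identity or as a reset); the reversed NFA $M'$ has initial state $1$, and subset construction on $M'$ together with the collapsing gives the $3$-state $B$. For $m=3$ I would take $Q_M=\{0,1,2\}$, $s_M=0$, $F_M=\{2\}$, a cyclic letter $a$ ($i\mapsto i+1 \bmod 3$) and one more letter — say $b$ — chosen so that the reversal automaton can reach every non-final subset of $\{0,1,2\}$ (the non-final subsets are $\emptyset,\{1\},\{2\},\{1,2\}$, four of them, plus $f_B$). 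The role of the extra letter is exactly to break the cyclic symmetry enough to reach subsets of size $0$ and $2$; I expect $\delta_M(\cdot,b)$ to be something like a partial reset toward a single state. Then I would verify reachability by explicitly listing, for each of the $2^{m-1}+1$ states, a word in $\Sigma^*$ driving $s_B$ to it — a short finite check for $m=2,3$.

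For distinguishability I would use the same idea as in Theorem~\ref{L_1^R L_2 lower bound}: given two distinct non-final subsets $I\ne J$ of $Q_M$, pick $x$ in the symmetric difference; since $B$ simulates the reversal automaton $A$ until a set containing $s_M=0$ is hit, and then stays in $f_B$ (a final state), I need a word $w$ with $\delta_A(I,w)$ containing $0$ but $\delta_A(J,w)$ not — equivalently a word over the reversed transitions steering the chosen element $x$ into $0$ while keeping every element of $J$ out of $0$. The cyclic letter $a$ handles the "steer one element to $0$" part; I must check, in each small case, that the other set does not accidentally also reach $0$ under the same word — for the concrete $M$ above this is immediate because distinct singletons under $a^k$ stay distinct, and for larger subsets one sorts elements and peels them off one at a time as in Case 3 of part (I) of Theorem~\ref{L_1^R L_2 lower bound}. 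The state $f_B$ is distinguished from every non-final state since $f_B\in F_B$ and no other state is final, and $f_B$ is absorbing.

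The main obstacle I anticipate is that for $m=2$ and especially $m=3$ a single "rotation" letter is not enough to make every non-final subset reachable (rotations of $\{2\}$ never produce $\emptyset$ or a two-element set), so the witness genuinely needs a second, non-permutation letter, and its transitions must be chosen with care: too strong a reset kills distinguishability, too weak a one fails reachability. I would resolve this by the explicit small-case construction rather than a uniform family — which is presumably why the statement is phrased as a lemma for $m\in\{2,3\}$ only, with the general $m\ge 4$ case handled by a separate (uniform) construction elsewhere. Once the transition tables are fixed, both the reachability list and the finite distinguishability check are mechanical.
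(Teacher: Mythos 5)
Your overall plan is the paper's own: take $N$ to be the one-state DFA accepting $\Sigma^*$, so the target is $L(M)^R\Sigma^*$, and exhibit hand-tuned witnesses for $m=2$ and $m=3$ for which the $(2^{m-1}+1)$-state automaton $B$ of Theorem~\ref{L_1^R L_2 upper bound m>=2 n=1} (the non-final subsets together with one absorbing final state $f_B$) is minimal. For $m=2$ your witness ($a$ swapping $0,1$ plus a second letter) matches the paper's, and the finite check goes through; note only that the ``identity'' alternative you float does not work, since with the swap letter alone just $\{1\}$ and $f_B$ are reachable, so the second letter must be the reset.

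The gap is the $m=3$ case, which is the actual content of the lemma: you never fix the second letter, and your guiding guess (``a partial reset toward a single state'') is exactly where plausible choices fail. Concretely, with the cyclic $a$ plus a reset toward the initial state ($b$ sending every state to $0$, or $1\mapsto 0$, $2\mapsto 1$), the subset $\{1,2\}$ is unreachable in the collapsed reversal automaton, because no reachable non-final set has $b$-preimage equal to $\{1,2\}$, so only four of the five states appear; with $a$ plus $b$ fixing $1$ and sending $2\mapsto 1$, all five states are reachable but $\{1\}$ and $\{1,2\}$ are equivalent. The latter failure also shows that your distinguishability sketch is too optimistic for nested pairs: the subset reached from $\{1,2\}$ always contains the subset reached from $\{1\}$, so any separating word must be accepted from $\{1,2\}$ and rejected from $\{1\}$, and powers of the cyclic letter can never achieve this because $\{1\}$ already hits $0$ after one $a$; the extra letter must be chosen so that $2$ is reachable from $0$ in $M$ along a path none of whose reversed prefixes reaches $1$. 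This is entirely fixable --- for instance, over $\{a,b\}$ the choice $b$ with $0\mapsto 0$, $1\mapsto 2$, $2\mapsto 2$ together with the cyclic $a$ makes all five states reachable, and the word $ba$ separates $\{1,2\}$ and $\{2\}$ from $\{1\}$, while the paper instead writes down an explicit three-letter transition table --- but until such a table is fixed and the finite check actually performed, the $m=3$ half of the lemma is not yet proved.
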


\begin{proof}
When $m=2$ and $n = 1$. We can construct the following witness DFAs.
Let $M=(\{0, 1\},\Sigma , \delta_M , 0, \{1\} )$ be a DFA, where
$\Sigma = \{a,b\}$, and the transitions are given as:
\begin{itemize}
\item $\delta_M(0, a) = 1 \mbox{, } \delta_M(1, a) = 0,$
\item $\delta_M(0, b) = 0 \mbox{, } \delta_M(1, b) = 0.$
\end{itemize}
Let $N$ be the DFA accepting $\Sigma^*$. Then the resulting DFA for
$L(M)^R \Sigma^*$ is $A=(\{0, 1, 2\},\Sigma , \delta_A , 0, \{1\} )$
where
\begin{itemize}
\item $\delta_A(0, a) = 1 \mbox{, } \delta_A(1, a) = 1\mbox{, }\delta_A(2, a) = 2\mbox{, }$
\item $\delta_A(0, b) = 2 \mbox{, } \delta_A(1, b) = 1\mbox{, }\delta_A(2, b) = 2.$
\end{itemize}

When $m=3$ and $n = 1$. The witness DFAs are as follows. Let
$M'=(\{0, 1, 2\},\Sigma' , \delta_{M'} , 0, \{2\} )$ be a DFA, where
$\Sigma' = \{a,b,c\}$, and the transitions are:
\begin{itemize}
\item $\delta_{M'}(0, a) = 1 \mbox{, } \delta_{M'}(1, a) = 2\mbox{, }\delta_{M'}(2, a) = 0\mbox{, }$
\item $\delta_{M'}(0, b) = 0 \mbox{, } \delta_{M'}(1, b) = 0\mbox{, }\delta_{M'}(2, b) = 1\mbox{, }$
\item $\delta_{M'}(0, c) = 0 \mbox{, } \delta_{M'}(1, c) = 2\mbox{, }\delta_{M'}(2, c) = 1\mbox{. }$
\end{itemize}
Let $N'$ be the DFA accepting $\Sigma'^*$. The resulting DFA for
$L(M')^R \Sigma'^*$ is $A'=(\{0, 1, 2, 3, 4\},\Sigma' , \delta_{A'}
, 0, \{3\} )$ where
\begin{itemize}
\item $\delta_{A'}(0, a) = 1 \mbox{, } \delta_{A'}(1, a) = 3\mbox{, }\delta_{A'}(2, a) = 2\mbox{, }\delta_{A'}(3, a) = 3\mbox{, }\delta_{A'}(4, a) = 3\mbox{, }$
\item $\delta_{A'}(0, b) = 2 \mbox{, } \delta_{A'}(1, b) = 4\mbox{, }\delta_{A'}(2, b) = 2\mbox{, }\delta_{A'}(3, b) = 3\mbox{, }\delta_{A'}(4, b) = 4\mbox{, }$
\item $\delta_{A'}(0, c) = 1 \mbox{, } \delta_{A'}(1, c) = 0\mbox{, }\delta_{A'}(2, c) = 2\mbox{, }\delta_{A'}(3, c) = 3\mbox{, }\delta_{A'}(4, c) = 4\mbox{. }$
\end{itemize}
\end{proof}

The above result shows that the bound $2^{m-1}+1$ is reachable when
$m$ is equal to 2 or 3 and $n = 1$. The last case is $m\ge 4$ and $n =
1$.

\begin{theorem}
\label{L_1^R L_2 lower bound m>=4 n=1} Given an integer $m\ge 4$,
there exists a DFA $M$ of $m$ states and a DFA $N$ of $1$ state such
that any DFA accepting $L(M)^R L(N)$ needs at least $2^{m-1}+1$
states.
\end{theorem}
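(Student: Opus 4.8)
The plan is to exhibit an explicit $m$-state witness DFA $M$ (over a small alphabet, say $\Sigma=\{a,b,c,d\}$ as in the proof of Theorem~\ref{L_1^R L_2 lower bound}, possibly with one letter dropped since $N$ contributes nothing), take $N$ to be the one-state DFA accepting $\Sigma^*$, and then show that the DFA $B$ constructed in the proof of Theorem~\ref{L_1^R L_2 upper bound m>=2 n=1} is already minimal for this choice of $M$. Recall that $B$ has state set $(Q_A-F_A)\cup\{f_B\}$, where $A$ is the $2^m$-state subset automaton for $L(M)^R$, $F_A=\{\,i\subseteq Q_M \mid s_M\in i\,\}$ has size $2^{m-1}$, and $f_B$ is the merged sink accepting state. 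So I must prove two things: (I) every subset $I\subseteq Q_M$ with $s_M\notin I$ is reachable in $B$ from $s_B$, giving $2^{m-1}-1$ non-final states, plus the state $f_B$; and (II) any two of these $2^{m-1}-1$ subsets are distinguishable in $B$ (they are trivially all distinguishable from $f_B$, which is the unique accepting and absorbing state, as soon as each of them can reach $f_B$ — which reachability of a superset containing $s_M$ guarantees). Adding up gives $2^{m-1}-1+1=2^{m-1}+1$, matching the upper bound.

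For step (I), I would reuse the reachability arguments from part (I) of the proof of Theorem~\ref{L_1^R L_2 lower bound}: with $M$ having an $a$-cycle $i\mapsto i+1 \bmod m$ on states $\{0,\dots,m-1\}$, the letter $b$ collapsing $m-1$ to $m-2$ (and fixing everything else), and $c$ swapping $m-2$ and $m-1$ (fixing the rest), the strings $ab(ac)^{*}\cdots$ used there reach, in the reversed-subset automaton $A$, every subset of $Q_M$. The only adjustment is that here we work in $B$, where the moment a transition would land in $F_A$ we are sent to $f_B$; so I restrict attention to the reachability words whose intermediate and final subsets all avoid $s_M=0$, and check that the family of such words still reaches every subset not containing $0$. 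This is essentially the $q=\emptyset$ sub-case (Case 2) of part (I) of the earlier proof, transplanted into $B$, so it should go through with only notational changes, using $m\ge 4$ to have enough room for the $(ac)$-blocks. For step (II), distinguishability of two subsets $I\ne J$ with $0\notin I,J$: picking $x\in I\triangle J$, the string $a^{m-1-x}$ (or $a^{x}$, depending on orientation) drives exactly one of them into a subset containing $0$, hence into $f_B$, hence to acceptance, while the other stays non-final — this mirrors part (II) of the earlier proof.

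The main obstacle I anticipate is purely bookkeeping: verifying that the reachability words can be chosen to never pass through a subset containing the initial state $0$ before the intended final step, since in $B$ any such visit prematurely and irreversibly sends the computation to $f_B$. In the earlier two-component proof this was handled by the $q$-coordinate absorbing the effect, but here there is no second coordinate to cushion it, so the word design must be slightly more careful — in particular one must reach a target $I\not\ni 0$ using only $a$'s, $b$'s and $c$'s in an order that keeps $0$ out of the running subset, which is why small cases $m=2,3$ were peeled off into Lemma~\ref{L_1^R L_2 lower bound m=2 or 3 n=1} and only $m\ge4$ is treated here. Once the word family is pinned down, the counting and the distinguishability argument are routine.
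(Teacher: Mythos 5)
Your step (I) is fine and is essentially what the paper does (the words over $a,b,c$ from the proof of Theorem~\ref{L_1^R L_2 lower bound} whose target avoids $0$ also keep $0$ out of every intermediate subset), but step (II) has a genuine gap, and it is exactly the point you dismissed as routine. In $B$ the $2^{m-1}$ accepting subsets of $A$ are merged into one \emph{absorbing} accepting sink $f_B$, so acceptance from a state $I$ is a first-passage property: a word $w$ is accepted from $I$ iff some prefix of $w$ drives $I$ into a subset containing $0$, not iff the final subset contains $0$. Consequently the Theorem~\ref{L_1^R L_2 lower bound}-style witness string $a^{x}$ (or $a^{m-1-x}$) with $x\in I\setminus J$ does not separate $I$ from $J$: it sends $I$ to $f_B$, but it also sends $J$ to $f_B$ as soon as $J$ contains any element smaller than $x$. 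Worse, this cannot be repaired by choosing cleverer separating words for the witness you propose (the DFA of Theorem~\ref{L_1^R L_2 lower bound} with $d$ acting as the identity, or with $d$ dropped): take $m=4$ and the reachable states $I=\{1\}$ and $J=\{1,2\}$ of $B$. In the reversed automaton the element $1$ is fixed by $b$, $c$ (and by an identity $d$), and $0$ can only be created from $1$ by reading $a$; hence from either state exactly the words containing at least one $a$ are accepted, so $I$ and $J$ are equivalent and $B$ is not minimal. Thus your witness does not reach $2^{m-1}+1$, and the obstacle is not the reachability bookkeeping you flagged but the distinguishability itself.

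The paper resolves this by changing the witness: the letter $d$ is redefined so that $\delta_M(0,d)=0$, $\delta_M(i,d)=i+1$ for $1\le i\le m-2$, and $\delta_M(m-1,d)=1$. In the reversed subset automaton $d$ then fixes $0$ and cyclically permutes $\{1,\ldots,m-1\}$, so for $x\in I\setminus J$ the word $d^{x-1}a$ separates $I$ from $J$: reading $d^{x-1}$ moves $x$ to $1$ while keeping both subsets inside $\{1,\ldots,m-1\}$ (no premature visit to $f_B$), and the final $a$ sends exactly the subset containing $x$ into a set containing $0$, i.e.\ into $f_B$, while the image of $J$ lies in $\{1,\ldots,m-2\}$. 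So to fix your proof you must build this (or an equivalent) ``rotate $\{1,\dots,m-1\}$ while fixing $0$'' letter into $M$ and use words of the form $d^{x-1}a$ in step (II); with the witness as you stated it, the claimed lower bound is simply false.
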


\begin{proof}
Let $M=(Q_M,\Sigma , \delta_M , 0, \{m-1\} )$ be a DFA,
shown in Figure~\ref{DFAM-rev-cat-n=1},
where $Q_M = \{0,1,\ldots ,m-1\}$, $m\ge 4$, $\Sigma = \{a,b,c,d\}$,
and the transitions are given as:
\begin{itemize}
\item $\delta_M(i, a) = i+1 \mbox{ mod }m \mbox{, } i=0, \ldots , m-1,$
\item $\delta_M(i, b) = i \mbox{, } i=0, \ldots , m-2\mbox{, } \delta_M(m-1, b) = m-2 \mbox{, }$
\item $\delta_M(i, c) = i \mbox{, } i=0, \ldots , m-3\mbox{, } \delta_M(m-2, c) = m-1 \mbox{, } \delta_M(m-1, c) = m-2 \mbox{,}$
\item $\delta_M(0, d) = 0 \mbox{, } \delta_M(i, d) = i+1 \mbox{, } i=1, \ldots , m-2\mbox{, } \delta_M(m-1, d) = 1 \mbox{. }$
\end{itemize}
\begin{figure}[ht]
  \begin{center}
  \includegraphics[scale=0.17]{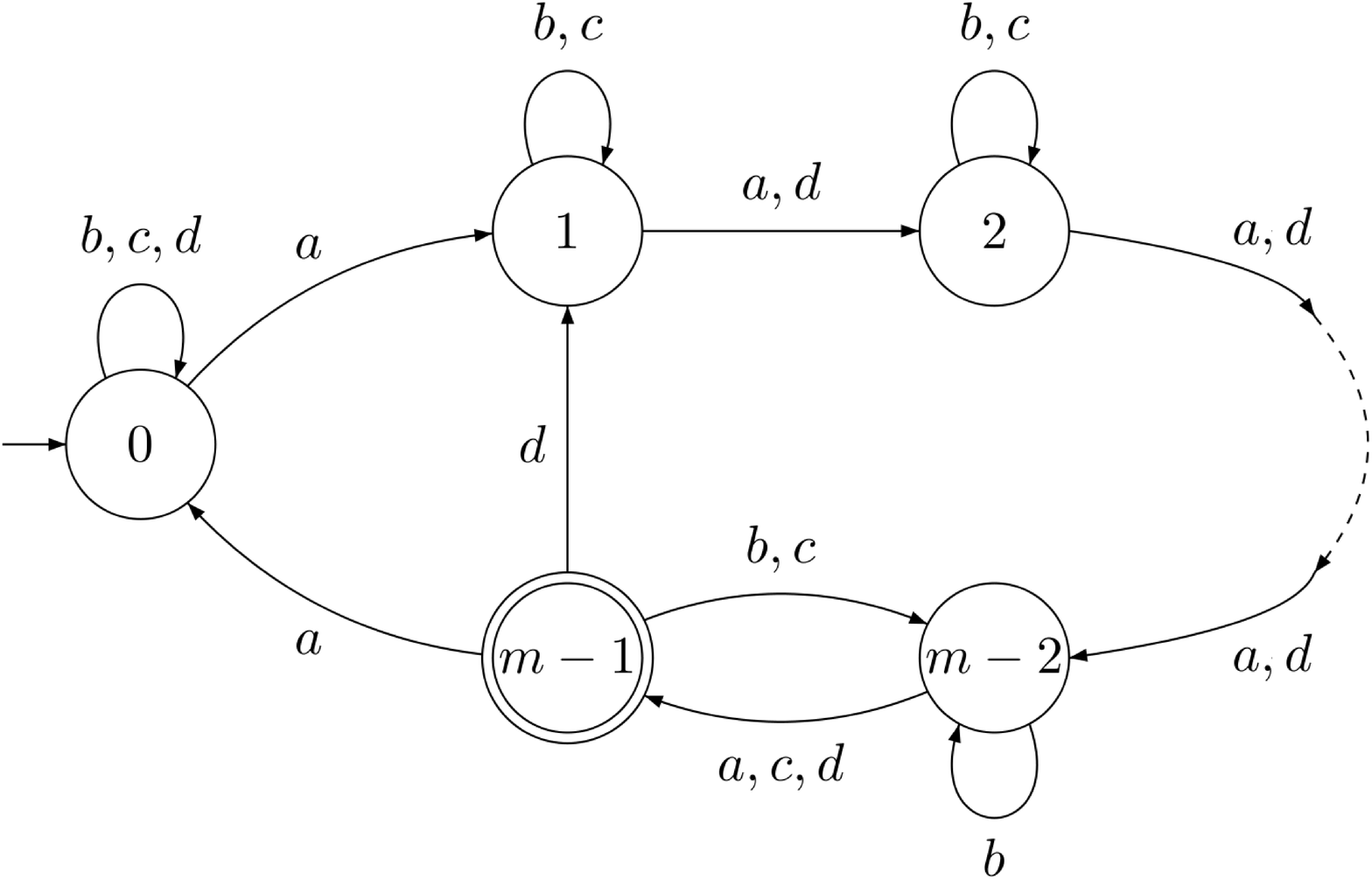}
  \end{center}
  \caption{Witness DFA $M$ of Theorem~\ref{L_1^R L_2 lower bound m>=4 n=1} showing that the upper bound in Theorem~\ref{L_1^R L_2 upper bound m>=2 n=1} is reachable when $m\ge 4$ and $n=1$}
\label{DFAM-rev-cat-n=1}
\end{figure}
Let $N$ be the DFA accepting $\Sigma^*$. Then
$L(M)^RL(N)=L(M)^R\Sigma^*$. Now we design a DFA $A=(Q_A, \Sigma ,
\delta_A , \{m-1\}, F_A)$ similar to the proof of Theorem~\ref{L_1^R
L_2 lower bound}, where $Q_A = \{q \mid q\subseteq Q_M\}$, $\Sigma =
\{a,b,c,d\}$, $F_A = \{q \mid 0\in q \mbox{, }q\in Q_A\}$, and the
transitions are defined as:
\[
\delta_A(p, e) = \{j \mid \delta_M(j, e)=i\mbox{, }i\in p\} \mbox{,
} p\in Q_A\mbox{, } e \in \Sigma.
\]
It is easy to see that $A$ is a DFA that accepts $L(M)^R$. Since the
transitions of $M$ on letters $a$, $b$, and $c$ are exactly the same
as those of DFA $M$ in the proof of Theorem~\ref{L_1^R L_2 lower
bound}, we can say that $A$ is minimal and it has $2^{m}$ states,
among which $2^{m-1}$ states are final.

%(** this definition can be omitted, because it is exactly the same as in the previous theorem **

Define $B=(Q_B,\Sigma , \delta_B , s_B, \{f_B\})$ where $f_B\notin
Q_A$, $Q_B=(Q_A-F_A)\cup \{f_B\}$,
\begin{eqnarray*}
s_B = \left\{
\begin{array}{l l}
  s_A & \mbox{if }s_A\notin F_A,\\
  f_B & \mbox{otherwise.}\\
\end{array} \right.
\end{eqnarray*}
and for any $e\in \Sigma$ and $I\in Q_B$,
\begin{eqnarray*}
\delta_B(I, e) = \left\{
\begin{array}{l l}
  \delta_A(I, e) & \mbox{if }\delta_A(I, e)\notin F_A,\\
  f_B & \mbox{if }\delta_A(I, e)\in F_A,\\
  f_B & \mbox{if }I=f_B.\\
\end{array} \right.
\end{eqnarray*}
DFA $B$ is the same as $A$ except that $A$'s $2^{m-1}$ final states
are changed into sink states and merged to one sink, final state, as
we did in the proof of Theorem~\ref{L_1^R L_2 upper bound m>=2 n=1}.
%(** new **
%Let $B = (Q_B,\Sigma , \delta_B , s_B, \{f_B\})$ be the DFA constructed from $M$ exactly as described in the proof of the previous theorem.
%****)
Clearly, $B$ has $2^m-2^{m-1}+1=2^{m-1}+1$ states and
$L(B)=L(M)^R\Sigma^*$. Next we show that $B$ is a minimal DFA.

(I) Every state $I\in Q_B$ is reachable from $\{m-1\}$. The proof is
similar to that of  Theorem~\ref{L_1^R L_2 lower bound}. We consider
the following four cases:

\begin{itemize}
\item[{\rm 1.}]$I=\emptyset$.
$\delta_A(\{ m-1 \}, b) = I=\emptyset .$
\item[{\rm 2.}]$I=f_B$. $\delta_A(\{ m-1 \}, a^{m-1}) =I=f_B.$
\item[{\rm 3.}]$|I|=1$.
Assume that $I=\{ i \}$, $1\leq i\leq m-1$. Note that $i\neq 0$
because all the final states in $A$ have been merged into $f_B$. In
this case, $\delta_A(\{ m-1 \}, a^{m-1-i}) =I.$
\item[{\rm 4.}]$2\leq |I|\leq m$.
Assume that $I=\{ i_1, i_2, \ldots ,i_k \}$, $1\leq i_1<i_2< \ldots
<i_k \leq m-1$, $2\leq k\leq m$. $\delta_A(\{ m-1 \}, w) = I$, where
$$w = ab(ac)^{i_2-i_1-1}ab(ac)^{i_{3}-i_{2}-1}\cdots ab(ac)^{i_k-i_{k-1}-1}a^{m-1-i_k}.$$
\end{itemize}

(II) Any two different states $I$ and $J$ in $Q_B$ are
distinguishable.

Since $f_B$ is the only final state in $Q_B$, it is inequivalent to
any other state. Thus, we consider the case when neither of $I$ and
$J$ is $f_B$.

Without loss of generality, we may assume that $|I|\geq |J|$. Let
$x\in I-J$. $x$ is always greater than $0$ because all the states
which include $0$ have been merged into $f_B$. Then a string
$d^{x-1}a$ can distinguish these two states because
\begin{eqnarray*}
\delta_B(I, d^{x-1}a)& = & f_B,\\
\delta_B(J, d^{x-1}a) & \neq & f_B.
\end{eqnarray*}
Since all the states in $B$ are reachable and pairwise
distinguishable, $B$ is a minimal DFA. Thus, any DFA accepting
$L(M))^R\Sigma^*$ needs at least $2^{m-1}+1$ states.
\end{proof}

After summarizing Theorem~\ref{L_1^R L_2 upper bound m>=2 n=1},
Theorem~\ref{L_1^R L_2 lower bound m>=4 n=1} and Lemma~\ref{L_1^R
L_2 lower bound m=2 or 3 n=1}, we obtain the state complexity of the
combined operation $L_1^RL_2$ for $m \ge 2$ and $n=1$.

\begin{theorem}
\label{L_1^R L_2 state complexity m>=2 n=1} For any integer $m \ge
2$, let $L_1$ be an $m$-state DFA language and $L_2$ be a $1$-state
DFA language. Then $2^{m-1}+1$ states are both sufficient and
necessary in the worst case for a DFA to accept $L_1^RL_2$.
\end{theorem}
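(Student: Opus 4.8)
The plan is to prove the equality by sandwiching the worst-case size of a DFA for $L_1^R L_2$ (with $m \ge 2$, $n = 1$) between $2^{m-1}+1$ from above and $2^{m-1}+1$ from below, and then matching the two bounds. Since $L_2$ is recognised by a complete $1$-state DFA, $L_2$ is either $\emptyset$ or $\Sigma^*$; the first case makes $L_1^R L_2 = \emptyset$ trivial, so everything of interest happens when $L_2 = \Sigma^*$.

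For the "sufficient" direction I would simply invoke Theorem~\ref{L_1^R L_2 upper bound m>=2 n=1}: reverse $M$, determinise to a $2^m$-state DFA $A$ whose $2^{m-1}$ accepting states all lie in the set $\{i \mid s_M \in i\}$, turn those accepting states into sink states and merge them into a single accepting sink $f_B$ (they are equivalent), which yields a DFA of $2^m - 2^{m-1} + 1 = 2^{m-1}+1$ states recognising $L_1^R \Sigma^*$. For the "necessary" direction I would exhibit, for every $m \ge 2$, an $m$-state DFA $M$ and the $1$-state DFA $N$ for $\Sigma^*$ such that any DFA for $L(M)^R L(N)$ needs at least $2^{m-1}+1$ states. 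Here I split on $m$: for $m \in \{2,3\}$ I would take the explicit small witnesses of Lemma~\ref{L_1^R L_2 lower bound m=2 or 3 n=1}, and for $m \ge 4$ the witness $M$ of Figure~\ref{DFAM-rev-cat-n=1}, whose $a,b,c$-transitions coincide with those of the witness of Theorem~\ref{L_1^R L_2 lower bound}, so that the reversal automaton $A$ is already minimal with all $2^m$ subsets reachable and pairwise distinguishable; one then checks, exactly as in Theorem~\ref{L_1^R L_2 lower bound m>=4 n=1}, that after collapsing the $2^{m-1}$ accepting subsets of $A$ into $f_B$ the remaining $2^{m-1}+1$ states stay reachable (using $a^{m-1-i}$ and the $ab(ac)^{\ast}$-words) and stay pairwise distinguishable (using words of the form $d^{x-1}a$, which rely on the extra $d$-transitions of $M$).

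Combining the upper bound with the two lower-bound cases shows that $2^{m-1}+1$ states are both sufficient and necessary in the worst case for every $m \ge 2$, which is the claim.

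The main obstacle is organisational rather than mathematical: all three ingredients are already established, so the only point demanding care is the case split. The "general" witness of Theorem~\ref{L_1^R L_2 lower bound m>=4 n=1} uses the $d$-transitions $\delta_M(i,d) = i+1$ for $1 \le i \le m-2$ and $\delta_M(m-1,d) = 1$, and these give a rich enough action on $\{1,\dots,m-1\}$ only when $m \ge 4$; the construction degenerates for $m = 2,3$, which is precisely why Lemma~\ref{L_1^R L_2 lower bound m=2 or 3 n=1} provides separate ad hoc automata for those sizes. Once the statement is phrased so that the small cases are explicitly covered, no further argument is needed.
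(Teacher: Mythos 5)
Your proposal is correct and matches the paper exactly: the paper obtains this theorem by combining the upper bound of Theorem~\ref{L_1^R L_2 upper bound m>=2 n=1} with the lower bounds of Lemma~\ref{L_1^R L_2 lower bound m=2 or 3 n=1} (for $m=2,3$) and Theorem~\ref{L_1^R L_2 lower bound m>=4 n=1} (for $m\ge 4$), which is precisely your case split. No gaps.
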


\section{Star combined with catenation}\label{sec:star-cat}
In this section, we investigate the state complexity of $L(A)^*L(B)$ for two DFAs $A$ and $B$ of sizes $m,n \ge 1$, respectively.
We first notice that, when $n = 1$, the state complexity of $L(A)^*L(B)$ is 1 for any $m \ge 1$.
This is because $B$ is complete ($L(B)$ is either $\emptyset$ or $\Sigma^*$), and we have either $L(A)^*L(B) = \emptyset$ or $\Sigma^* \subseteq L(A)^*L(B) \subseteq \Sigma^*$.
Thus, $L(A)^*L(B)$ is always accepted by a 1 state DFA.
Next, we consider the case where $A$ has only one final state and it is also the initial state.
In such a case, $L(A)^*$ is also accepted by $A$, and hence the state complexity of $L(A)^*L(B)$ is equal to that of $L(A)L(B)$.
We will show that, for any $A$ of size $m \ge 1$ in this form and any $B$ of size $n \ge 2$, the state complexity of $L(A)L(B)$ (also $L(A)^*L(B)$) is $m(2^n-1) - 2^{n-1} + 1$ (Theorems~\ref{thm:star-cat-upper-special} and~\ref{thm:star-cat-lower-special}), which is lower than the state complexity of catenation in the general case.
Lastly, we consider the state complexity of $L(A)^*L(B)$ in the remaining case, that is when $A$ has at least a final state that is not the initial state and $n \ge 2$.
We will show that its upper bound (Theorem~\ref{thm:star-cat-upper}) coincides with its lower bound (Theorem~\ref{thm:star-cat-lower}), and the state complexity is $5 \cdot 2^{m+n-3} - 2^{m-1} - 2^n +1$.

Now, we consider the case where DFA $A$ has only one final state and it is also the initial state, and first obtain the following upper bound of the state complexity of $L(A)L(B)$ ($L(A)^*L(B)$), for any DFA $B$ of size $n \ge 2$.
\begin{theorem}\label{thm:star-cat-upper-special}
For integers $m \ge 1$ and $n \ge 2$, let $A$ and $B$ be two DFAs with $m$ and $n$ states, respectively, where $A$ has only one final state and it is also the initial state.
Then, there exists a DFA of at most $m(2^n-1) - 2^{n-1} + 1$ states that accepts $L(A)L(B)$, which is equal to $L(A)^*L(B)$.
\end{theorem}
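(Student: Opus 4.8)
The plan is to build a DFA for $L(A)L(B)$ directly by the standard subset-construction for catenation, and then to argue that the special shape of $A$ forces many of the naively-expected states to coincide or be unreachable, bringing the count down from the generic catenation bound $m2^n - 2^{n-1}$ to $m(2^n-1) - 2^{n-1} + 1$. Write $A = (Q_A, \Sigma, \delta_A, s, \{s\})$ with $|Q_A| = m$ and $B = (Q_B, \Sigma, \delta_B, t_0, F_B)$ with $|Q_B| = n$. A word $w$ lies in $L(A)L(B)$ iff it factors as $w = uv$ with $u \in L(A)$ and $v \in L(B)$; since $s$ is final, whenever the $A$-component is in state $s$ we may (nondeterministically) have just finished a prefix in $L(A)$ and launched a copy of $B$. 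So the natural automaton has states $\langle p, S \rangle$ with $p \in Q_A$ and $S \subseteq Q_B$, where $S$ tracks all the active $B$-computations, and the key constraint is: \emph{whenever $p = s$ we must have $t_0 \in S$}. The initial state is $\langle s, \{t_0\}\rangle$ and $\langle p, S\rangle$ is final iff $S \cap F_B \neq \emptyset$.

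The state count then goes as follows. For $p = s$ we need $t_0 \in S$, giving $2^{n-1}$ choices of $S$; for each of the other $m-1$ values of $p$ we get $2^n$ choices of $S$, for a subtotal of $2^{n-1} + (m-1)2^n$. I would then observe that this still overcounts: because $A$ has the form $L(A)=L(A)^*$, once a $B$-computation has been started it is never ``restarted'' except by passing through $s$, and one shows that the states $\langle p, \emptyset\rangle$ for $p \neq s$ are all equivalent to a single dead state — they can never reach a final state, since the only way to inject $t_0$ into the second component is via $p = s$, and... actually the cleaner bookkeeping is: the $m-1$ states $\langle p,\emptyset\rangle$ ($p\neq s$) together with $\langle s, \{t_0\}\rangle$-type states collapse so that exactly $m-1$ of the generic states are saved and one sink is added back, yielding $2^{n-1} + (m-1)2^n - (m-1) + 0 = (m-1)(2^n-1) + 2^{n-1}$; and checking $(m-1)(2^n-1) + 2^{n-1} = m(2^n - 1) - 2^{n-1} + 1$ confirms the target. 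I would present the construction so that the dead states are merged from the outset and the arithmetic comes out directly, rather than patching afterwards.

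Concretely, the steps in order: (1) define the DFA $C$ for $L(A)L(B)$ with state set $\{\langle p,S\rangle : p\in Q_A,\ S\subseteq Q_B,\ p = s \Rightarrow t_0\in S\}$ together with a single sink state $d$, specifying $\delta_C$ so that a transition landing on $\langle s, S\rangle$ automatically adds $t_0$ to $S$, and any $\langle p,\emptyset\rangle$ with $p\neq s$ is identified with $d$; (2) verify $L(C) = L(A)L(B)$ by the factorization argument above, using $s$ final to justify that tracking ``$t_0\in S$ exactly when $p=s$'' is both sound and complete; (3) verify $L(A)L(B) = L(A)^*L(B)$, which is immediate since $L(A)^* = L(A)$ under the hypothesis on $A$ (this is already noted in the surrounding text); (4) count: $2^{n-1}$ states with $p = s$, $(m-1)(2^n - 1)$ states with $p \neq s$ and $S\neq\emptyset$, plus the one sink $d$, totalling $m(2^n-1) - 2^{n-1} + 1$.

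The main obstacle I anticipate is step (2): making precise why no further states are forced — in particular, why we never need a configuration with $p = s$ but $t_0 \notin S$, and why the states $\langle p, \emptyset\rangle$ for $p \neq s$ really are dead (equivalently, why a $B$-run, once dropped, cannot be silently revived without the $A$-component visiting $s$). This rests on the observation that the $A$-component evolves deterministically and independently, so the ``$A$-passes-through-$s$'' events — the only moments a fresh $t_0$ enters the second component — are completely determined by the input and by $p$; hence the invariant $t_0 \in S \iff p = s$ is maintained by every transition, and no other subsets of $Q_B$ ever arise in the $p=s$ fibre. Getting this invariant stated and checked cleanly against the transition function is the crux; the reachability and distinguishability arguments needed for the matching lower bound are deferred to Theorem~\ref{thm:star-cat-lower-special}.
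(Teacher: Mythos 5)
Your construction is essentially the paper's: the standard catenation subset construction on pairs $\langle p, S\rangle$, exploiting that the unique final state of $A$ is its initial state (so $L(A)^*=L(A)$ and $t_0$ is injected into the second component exactly when the first component lands on $s$). The genuine gap is in your treatment of the empty second component, which is precisely the point the paper handles with one observation you are missing: since the initial state is $\langle s,\{t_0\}\rangle$ and $B$ is \emph{complete}, $\delta_B(S,a)$ is nonempty whenever $S$ is, so the second component can never become empty. Hence the states $\langle p,\emptyset\rangle$ are simply unreachable and can be omitted outright; no sink $d$ is needed, and the automaton remains complete without it. Your alternative justification for collapsing them --- that they are reachable but ``dead'' --- is false on its own terms: from $\langle p,\emptyset\rangle$ with $p\neq s$, as soon as the deterministic $A$-component returns to $s$, your own transition rule injects $t_0$ and a final state may subsequently be reached. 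Moreover, keeping the sink makes your step (4) count $2^{n-1}+(m-1)(2^n-1)+1=m(2^n-1)-2^{n-1}+2$, one more than the bound you are proving (and inconsistent with the ``$+\,0$'' in your earlier arithmetic). The repair is exactly the paper's remark about completeness of $B$: drop the $\emptyset$-states and the sink, leaving $2^{n-1}$ states with first component $s$ and $(m-1)(2^n-1)$ states with first component different from $s$, i.e.\ $m(2^n-1)-2^{n-1}+1$.

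A second slip is the invariant you call the crux: $t_0\in S \iff p=s$ is wrong in the direction $t_0\in S\Rightarrow p=s$, since $t_0$ can perfectly well remain in (or re-enter) $S$ after the $A$-component has left $s$ (e.g.\ if some letter maps a state of $S$ back to $t_0$ in $B$). Only the implication $p=s\Rightarrow t_0\in S$ holds, and only it is needed: it is what excludes the $2^{n-1}-1$ nonempty sets $S\not\ni t_0$ from the fibre over $s$ and accounts for the $-2^{n-1}+1$ in the bound. If you actually enforced the biconditional in the transition function, the automaton would no longer accept $L(A)L(B)$. With these two points corrected, your argument coincides with the paper's proof.
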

\begin{proof}
Let $A = (Q_1, \Sigma, \delta_1, s_1, \{s_1\})$ and $B = (Q_2, \Sigma, \delta_2, s_2, F_2)$.
We construct a DFA $C = (Q, \Sigma, \delta, s, F)$ such that
\begin{eqnarray*}
    & & Q = Q_1 \times ( 2^{Q_2} - \{ \emptyset \}) - \{s_1\} \times ( 2^{Q_2 - \{s_2\}} - \{\emptyset\}), \\
    & & s = \langle s_1, \{ s_2 \} \rangle, \\
    & & F = \{ \langle q, T \rangle \in Q \mid T \cap F_2 \neq \emptyset \}, \\
    & & \delta(\langle q, T \rangle, a) = \langle q', T' \rangle,
            \mbox{ for $a \in \Sigma$, where $q' = \delta_1(q, a)$ and $T' = R \cup \{s_2\}$} \\
    & & \hspace{2cm}  \mbox{if $q' = s_1$, $T' = R$ otherwise, where $R = \delta_2(T,a)$.}
\end{eqnarray*}
Intuitively, $Q$ contains the pairs whose first component is a state of $Q_1$ and second component is a subset of $Q_2$.
Since $s_1$ is the final state of $A$, without reading any letter, we can enter the initial state of $B$.
Thus, states $\langle q, \emptyset \rangle$ such that $q \in Q_1$ can never be reached in $C$, because $B$ is complete.
Moreover, $Q$ does not contain those states whose first component is $s_1$ and second component does not contain $s_2$.

Clearly, $C$ has $m(2^n-1) - 2^{n-1} + 1$ states, and we can verify that $L(C) = L(A)L(B)$.
\end{proof}

Next, we show that this upper bound can be reached by some witness DFAs in the specific form.

\begin{figure}[ht]
  \begin{center}
  \includegraphics[scale=0.17]{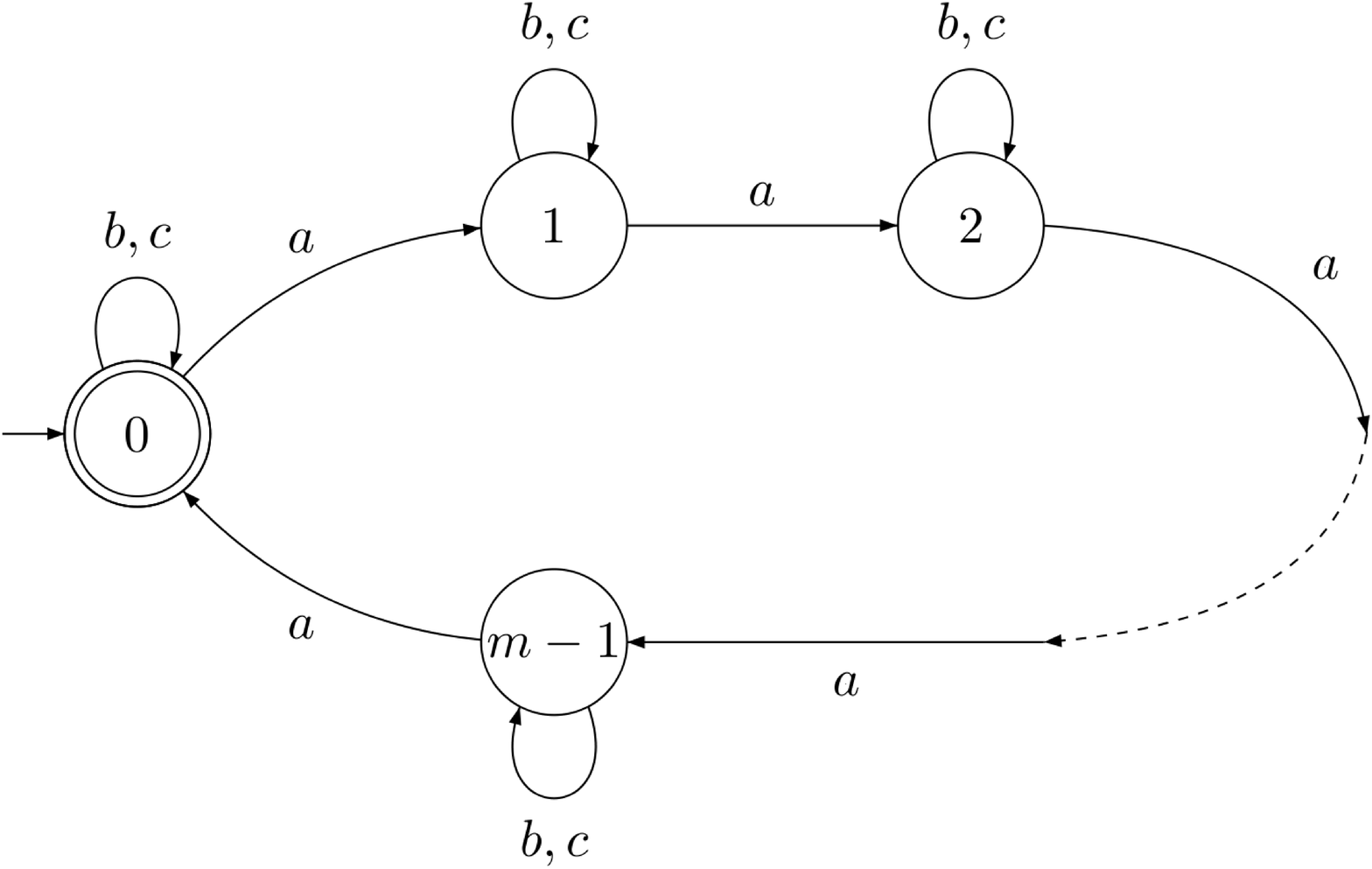}
  \end{center}
  \caption{Witness DFA $A$ for Theorem~\ref{thm:star-cat-lower-special} when $m \ge 2$}
\label{fig:DFAA-star-cat-special}
\end{figure}

\begin{figure}[ht]
  \begin{center}
  \includegraphics[scale=0.17]{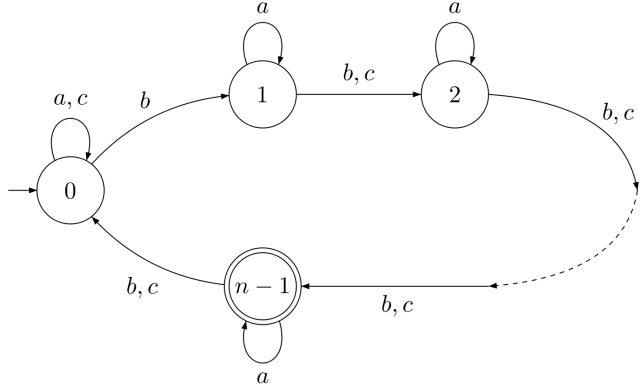}
  \end{center}
  \caption{Witness DFA $B$ for Theorem~\ref{thm:star-cat-lower-special} when $m \ge 2$}
\label{fig:DFAB-star-cat-special}
\end{figure}

\begin{theorem}\label{thm:star-cat-lower-special}
For any integers $m \ge 1$ and $n \ge 2$, there exist a DFA $A$ of $m$ states and a DFA $B$ of $n$ states, where $A$ has only one final state and it is also the initial state, such that any DFA accepting the language $L(A)L(B)$, which is equal to $L(A)^*L(B)$, needs at least $m(2^n-1) - 2^{n-1} + 1$ states.
\end{theorem}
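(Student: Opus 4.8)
The plan is to reuse the DFA $C$ constructed in the proof of Theorem~\ref{thm:star-cat-upper-special}: it already has exactly $m(2^n-1)-2^{n-1}+1$ states and accepts $L(A)L(B)=L(A)^*L(B)$, so it suffices to choose witnesses $A$ (with its single final state equal to its initial state) and $B$ for which $C$ is \emph{minimal}. The case $m=1$ is degenerate and handled separately: then $A$ must accept $\Sigma^{*}$, so $L(A)L(B)=\Sigma^{*}L(B)$ and the target size is $2^{n}-1-2^{n-1}+1=2^{n-1}$, which is exactly the worst-case complexity of catenating a one-state language with an $n$-state language; any witness realizing it (see Theorem~\ref{L_1^R L_2 state complexity m=1 n>=1} and~\cite{YuZhSa94}), paired with the trivial one-state $A$, settles this subcase.

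For $m\ge 2$ I would take $A$ and $B$ to be the automata of Figures~\ref{fig:DFAA-star-cat-special} and~\ref{fig:DFAB-star-cat-special}: $A$ on states $\{0,\dots,m-1\}$ with $0$ both initial and the only final state, one letter acting as the $m$-cycle $i\mapsto i+1\bmod m$ and the others chosen so that, together with $B$'s letters, all admissible pairs are produced; $B$ on states $\{0,\dots,n-1\}$ equipped with transitions in the style of the classical catenation witness, so that applying $\delta_{2}$ realizes every subset of $Q_{2}$. Reachability of every $\langle q,T\rangle\in Q$ from $\langle s_{1},\{s_{2}\}\rangle$ I would prove in two stages. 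First, $\langle s_{1},T\rangle$ is reachable for each $T\ni s_{2}$: keep the first component at $s_{1}$ (each return to $s_{1}$ re-inserts $s_{2}$, which is consistent with the definition of $Q$) and use $B$'s letters to enlarge and permute $T$, by induction on $|T|$. Second, from such a state a single step of the $A$-cycle lands in $\langle q,T'\rangle$ with $q\neq s_{1}$, and a further induction, together with a count of the states of $Q$, gives all pairs $\langle q,T\rangle$ with $q\neq s_{1}$ and $T\neq\emptyset$ arbitrary. For distinguishability, take two distinct states: if their second components differ, pick $x$ in the symmetric difference and apply a word that in $B$ carries $x$ into $F_{2}$ while carrying every other relevant state out of $F_{2}$ — here the structure of $B$ is essential; if the second components agree but the first components differ, apply a word on the $A$-cycle that brings exactly one first component to $s_{1}$, thereby forcing $s_{2}$ into that state's subset only, and finish by the previous case.

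The step I expect to be the main obstacle is the second half of distinguishability. The first component of a state of $C$ affects future computation only through the test ``is it $s_{1}$?'', which merely re-injects $s_{2}$; so the witnesses must be engineered so that the eventual presence or absence of $s_{2}$ in the second component is detectable by some continuation, and arranging this in $B$ \emph{simultaneously} with full reachability of $2^{Q_{2}}$ is the delicate design point. A secondary annoyance is the bookkeeping forced by the non-product shape of $Q$, which omits $\{s_{1}\}\times(2^{Q_{2}-\{s_{2}\}}-\{\emptyset\})$ and must be respected throughout the reachability induction, together with the usual separate check of small $n$ where the generic witness words degenerate.
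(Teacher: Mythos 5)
Your overall strategy---reuse the DFA $C$ from Theorem~\ref{thm:star-cat-upper-special}, dispose of $m=1$ via the known $1$-state/$n$-state catenation witnesses of~\cite{YuZhSa94}, and for $m \ge 2$ prove minimality of $C$ by induction on $|T|$ for reachability plus a two-case distinguishability argument---is exactly the paper's. The genuine gap is that the witnesses are never actually defined: you describe $A$ as a cycle under one letter with ``the others chosen so that all admissible pairs are produced'' and $B$ as being ``in the style of the classical catenation witness,'' and you yourself flag the simultaneous engineering of reachability and distinguishability in $B$ as the unresolved ``delicate design point.'' In a lower-bound proof that design \emph{is} the proof. The paper's concrete choice (three letters: $a$ cycles $A$ and is the identity on $B$; $b$ is the identity on $A$ and the full cycle on $B$; $c$ is the identity on $A$, fixes $0$ in $B$ and shifts every other state of $B$) is precisely what makes both halves go through, and nothing in your sketch pins down such a structure; your reachability outline (build all $\langle s_1,T\rangle$ with $s_2\in T$ while parked at $s_1$, then move the first component and shift $s_2$ out when the first component is not $s_1$) does match the paper's argument, but only modulo these unspecified letters.

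Second, your distinguishability plan for the case $q \neq q'$ does not work as stated. Driving exactly one first component to $s_1$ forces $s_2$ into that state's second component, but if the other second component already contains $s_2$ (or the two second components were equal to begin with) this creates no difference to hand to ``the previous case''; moreover that previous case is only straightforward when the first components agree, since otherwise the re-insertions of $s_2$ happen at different times on the two sides and can mask the chosen element of the symmetric difference. The paper proceeds in the opposite order: it first \emph{collapses} both second components to $\{0\}$ by reading $c^{n-1}$ (every state of $B$ is driven to $0$), then separates the first components with $a^{m-q}$, and finally reads $b^{n}$, under which the copy sitting at first component $0$ keeps re-inserting $0$ while $b$ cycles, so its second component grows to contain $n-1$, whereas the other copy's second component remains a singleton returning to $\{0\}$. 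Some mechanism of this kind---a collapsing letter in $B$ together with a growth test that exploits the re-insertion of $s_2$---is exactly what your plan identifies as missing but does not supply, so the proposal falls short of a proof.
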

\begin{proof}
When $m = 1$, the witness DFAs used in the proof of Theorem 1 in~\cite{YuZhSa94} can be used to show that the upper bound proposed in Theorem~\ref{thm:star-cat-upper-special} can be reached.

Next, we consider the case when $m \ge 2$.
We provide witness DFAs $A$ and $B$, depicted in Figures~\ref{fig:DFAA-star-cat-special} and~\ref{fig:DFAB-star-cat-special}, respectively, over the three letter alphabet $\Sigma = \{a,b,c\}$.

$A$ is defined as $A = (Q_1, \Sigma, \delta_1, 0 , \{0\})$ where $Q_1 = \{0,1,\ldots,m-1\}$, and the transitions are given as
    \begin{itemize}
    \item $\delta_1 (i, a) = i+1 \mbox{ mod } m$, for $i \in Q_1$,
    \item $\delta_1 (i, x) = i$, for $i \in Q_1$, where $x \in \{b,c\}$.
    \end{itemize}

$B$ is defined as $B = (Q_2, \Sigma, \delta_2, 0, \{n-1\})$ where $Q_2 = \{0,1,\ldots,n-1\}$, where the transitions are given as
    \begin{itemize}
    \item $\delta_2 (i, a) = i$, for $i \in Q_2$,
    \item $\delta_2 (i, b) = i+1 \mbox{ mod } n$, for $i \in Q_2$,
    \item $\delta_2 (0, c) = 0$, $\delta_2 (i, c) = i+1 \mbox{ mod } n$, for $i \in \{1, \ldots, n-1\}$.
    \end{itemize}

    Following the construction described in the proof of Theorem~\ref{thm:star-cat-upper-special}, we construct a DFA $C = (Q, \Sigma, \delta, s, F)$ that accepts $L(A)L(B)$ (also $L(A)^*L(B)$).
    To prove that $C$ is minimal, we show that (I) all the states in $Q$ are reachable from $s$, and (II) any two different states in $Q$ are not equivalent.

    For (I), we show that all the state in $Q$ are reachable by induction on the size of $T$.

    The basis clearly holds, since, for any $i \in Q_1$, state $\langle i, \{0\} \rangle$ is reachable from $\langle 0, \{0\} \rangle$ by reading string $a^{i}$, and state $\langle i, \{j\} \rangle$ can be reached from state $\langle i, \{0\} \rangle$ on string $b^{j}$, for any $i \in \{1, \ldots, m-1\}$ and $j \in Q_2$.

    In the induction steps, we assume that all the states $\langle q, T \rangle$ such that $|T| < k$ are reachable.
    Then, we consider the states $\langle q, T \rangle$ where $|T| = k$.
    Let $T = \{j_1, j_2, \ldots, j_k\}$ such that $0 \le j_1 < j_2 < \ldots < j_k \le n-1$.
    We consider the following three cases:
    \begin{enumerate}
    \item $j_1 = 0$ and $j_2 = 1$.
    For any state $i \in Q_1$, state $\langle i , T \rangle \in Q$ can be reached as
    \[
    \langle i, \{0, 1, j_3, \ldots, j_k\} \rangle = \delta(\langle 0, \{0, j_3 - 1, \ldots, j_k - 1\}\rangle, ba^{i}),
    \]
    where $\{0, j_3 - 1, \ldots, j_k - 1\}$ is of size $k-1$.

    \item $j_1 = 0$ and $j_2 > 1$.
    For any state $i \in Q_1$, state $\langle i, \{0, j_2, \ldots, j_k\} \rangle$ can be reached from state $\langle i, \{0, 1, j_3 - j_2 + 1, \ldots, j_k - j_2 + 1\} \rangle$ by reading string $c^{j_2 - 1}$.

    \item $j_1 > 0$.
    In such a case, the first component of state $\langle q, T \rangle$ cannot be $0$.
    Thus, for any state $i \in \{1, \ldots, m-1\}$, state $\langle i, \{j_1, j_2, \ldots, j_k\} \rangle$ can be reached from state $\langle i, \{0, j_2 - j_1, \ldots, j_k - j_1\} \rangle$ by reading string $b^{j_1}$.
    \end{enumerate}

    Next, we show that any two distinct states $\langle q, T \rangle$ and $\langle q', T' \rangle$ in $Q$ are not equivalent.
    We consider the following two cases:
    \begin{enumerate}
    \item $q \neq q'$.
    Without loss of generality, we assume $ q \neq 0$.
    Then, string $w = c^{n-1}a^{m-q}b^{n}$ can distinguish the two states, since $\delta(\langle q, T \rangle, w) \in F$ and $\delta(\langle q', T' \rangle, w) \not\in F$.

    \item $q = q'$ and $T \neq T'$.
    Without loss of generality, we assume that $|T| \ge |T'|$.
    Then, there exists a state $j \in T - T'$.
    It is clear that, when $q \neq 0$, string $b^{n-1-j}$ can distinguish the two states, and when $q = 0$, string $c^{n-1-j}$ can distinguish the two states since $j$ cannot be $0$.
    \end{enumerate}

    Due to (I) and (II), DFA $C$ needs at least $m(2^n-1) - 2^{n-1} + 1$ states and is minimal.
\end{proof}

In the rest of this section, we focus on the case where DFA $A$ contains at least one final state that is not the initial state.
Thus, this DFA is of size at least 2.
We first obtain the following upper bound for the state complexity.

\begin{theorem}\label{thm:star-cat-upper}
Let $A = (Q_1, \Sigma, \delta_1, s_1, F_1)$ be a DFA such that $|Q_1| = m > 1$ and $|F_1 - \{s_1\}| = k_1 \ge 1$, and $B = (Q_2, \Sigma, \delta_2, s_2, F_2)$ be a DFA such that $|Q_2| = n > 1$.
Then, there exists a DFA of at most $(\dfrac{3}{4}2^m - 1)(2^n-1) - (2^{m-1}-2^{m-k_1-1})(2^{n-1}-1)$ states that accepts $L(A)^* L(B)$.
\end{theorem}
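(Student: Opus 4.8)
The plan is to build the target DFA in two conceptual layers, following the same philosophy used in the proofs of Theorems~\ref{L_1^R L_2 upper bound} and~\ref{thm:star-cat-upper-special}. First I would recall the standard construction for $L(A)^*$: from $A$ one builds an NFA $A'$ for $L(A)^*$ by adding $\varepsilon$-transitions from every final state back to $s_1$ and making $s_1$ final; the subset construction then yields a DFA $A''$ for $L(A)^*$, and the known bound $\frac{3}{4}2^m$ for star arises because (i) the empty set is never reached once we fold in the new initial/final behaviour, and more importantly (ii) every reachable subset that contains a state of $F_1$ must also contain $s_1$ — so among the $2^{m-1}$ subsets containing $s_1$, only those are ``free'', while among the $2^{m-1}$ subsets not containing $s_1$, exactly $2^{m-1-k_1}\cdot 2^{\,?}$ are forbidden; carefully counted this gives the $\frac34 2^m - 1$ bound after also discarding $\emptyset$. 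I would then catenate with $B$ exactly as in Theorem~\ref{L_1^R L_2 upper bound}: the state set of the final DFA $C$ consists of pairs $\langle P, T\rangle$ with $P$ a reachable state of $A''$ and $T\subseteq Q_2$, with the usual rule that whenever the $A''$-component $P$ is a final state of $A''$ (equivalently $s_1\in P$, equivalently $P\cap F_1\neq\emptyset$ for reachable $P$) the $B$-component $T$ is forced to contain $s_2$.

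The counting then has two independent restrictions that must be combined. From the star layer: $P$ ranges over at most $\frac34 2^m - 1$ sets (all reachable subsets, minus $\emptyset$, minus the forbidden ones), and among these the ``accepting'' ones — those containing $s_1$, hence forcing $s_2\in T$ — number $2^{m-1}$, while wait: we must be careful, because the forbidden subsets are precisely certain non-$s_1$ subsets, so the split is: $2^{m-1}$ reachable $P$'s with $s_1\in P$ (all allowed), and $2^{m-1} - 1 - (\text{forbidden count})$ reachable $P$'s without $s_1$; the paper's arithmetic says the forbidden count is $2^{m-1} - 2^{m-k_1-1} - $(adjustment), engineered so that the total reachable $P$'s equal $\frac34 2^m - 1$. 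For each of the $2^{m-1}$ accepting $P$'s, $T$ must contain $s_2$, so $T$ ranges over $2^{n-1}$ sets; for each of the remaining non-accepting $P$'s, $T$ ranges over $2^n - 1$ sets (all nonempty subsets of $Q_2$, since $B$ is complete and so $T=\emptyset$ is unreachable). Putting it together, the count is
\[
\bigl(\tfrac34 2^m - 1 - 2^{m-1}\bigr)(2^n - 1) + 2^{m-1}(2^{n-1}) - (\text{correction for forbidden-set interaction}),
\]
and simplifying — together with the extra forced-$s_2$ savings coming precisely on the $(2^{m-1} - 2^{m-k_1-1})$ non-trivial accepting subsets — should collapse to $(\frac34 2^m - 1)(2^n - 1) - (2^{m-1} - 2^{m-k_1-1})(2^{n-1} - 1)$. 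The cleanest way to organise this is: take the ``naive'' product $(\frac34 2^m - 1)(2^n-1)$ as if every $P$-component allowed all $2^n-1$ nonempty $T$'s, then subtract, for each accepting $P$ that is genuinely distinct from the others in its forcing behaviour, the $2^{n-1}-1$ sets that become unreachable because $s_2$ is forced in — and the number of such $P$'s is exactly $2^{m-1} - 2^{m-k_1-1}$ because of how $s_1$ and the $k_1$ extra final states interact under the star construction.

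The main obstacle, and where I would spend the most care, is pinning down exactly which subsets of $Q_1$ are reachable in $A''$ and, among those, exactly how many are ``accepting'' in the sense that they force $s_2$ into the $B$-component — and crucially showing that the number of accepting subsets whose forced-$s_2$ actually removes otherwise-reachable pairs is $2^{m-1} - 2^{m-k_1-1}$ rather than the full $2^{m-1}$. The subtlety is that some accepting subsets of $A''$ (for instance those that would have to contain $s_2$ anyway by the dynamics, or the initial configuration) do not cost anything, and the $2^{m-k_1-1}$ term is exactly the book-keeping for those. I would handle this by explicitly describing the reachable-subset structure of the star-DFA $A''$ following~\cite{SaSaYu07} or the standard star construction, partitioning the $2^m$ subsets of $Q_1$ according to whether they contain $s_1$ and which final states they contain, discarding $\emptyset$ and the subsets shown unreachable, and then carrying the pair-construction count through carefully; the rest is routine arithmetic that simplifies to the claimed bound. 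Since this is an upper-bound theorem, I do not need to exhibit witnesses here — only to define $C$, argue $L(C) = L(A)^*L(B)$ from the correctness of the star construction and the catenation construction, and bound $|Q|$.
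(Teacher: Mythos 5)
Your overall architecture is the same as the paper's: apply the star subset construction to $A$, pair the resulting subsets with nonempty subsets of $Q_2$, force $s_2$ into the second component whenever the first component is accepting for $L(A)^*$, and count the surviving pairs. However, the step you yourself flag as ``the main obstacle'' --- pinning down which first components can occur and which of them force $s_2$ --- is precisely the content of the theorem, and where you sketch it, it is incorrect. Writing $F_0 = F_1 - \{s_1\}$ (so $|F_0| = k_1$), the possible first components are (a) the nonempty subsets of $Q_1 - F_0$ and (b) the subsets that contain $s_1$ and meet $F_0$; there are $2^{m-k_1}-1$ of type (a) and $2^{m-1}-2^{m-1-k_1}$ of type (b), for a total of $2^{m-1}+2^{m-k_1-1}-1 \le \tfrac{3}{4}2^m - 1$ (with equality exactly when $k_1=1$). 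The components that force $s_2$ are those meeting $F_1$, i.e.\ (when $s_1 \notin F_1$) exactly the $2^{m-1}-2^{m-1-k_1}$ sets of type (b) --- not the $2^{m-1}$ subsets containing $s_1$, as you assert. Your claimed equivalence ``$s_1\in P$ iff $P\cap F_1\neq\emptyset$ for reachable $P$'' holds only in one direction: a reachable subset meeting $F_1$ must contain $s_1$, but a type-(a) subset containing $s_1$ need not meet $F_1$ and forces nothing. Consequently your explanation of the factor $2^{m-1}-2^{m-k_1-1}$ --- that some accepting subsets ``do not cost anything'' because $s_2$ would be present anyway --- is not the right mechanism; the $2^{m-k_1-1}$ discount counts the reachable subsets that contain $s_1$ but avoid $F_0$, and these are simply not accepting at all. (If $s_1\in F_1$, those subsets do force $s_2$, the automaton only gets smaller, and the upper bound still stands.)

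Beyond this conceptual slip, the quantitative core of the argument is left undone: you leave a literal ``$2^{\,?}$'' in the count of forbidden subsets, your main formula contains an unspecified ``correction'' term, and you conclude that the expression ``should collapse'' to the claimed bound rather than deriving it. Once the structure above is fixed the derivation is short: take the state set to be all pairs $\langle p,t\rangle$ with $p$ of type (a) or (b) and $t$ a nonempty subset of $Q_2$, minus the pairs where $p$ is of type (b) and $t$ is a nonempty subset of $Q_2-\{s_2\}$; this gives at most $(\tfrac{3}{4}2^m - 1)(2^n-1) - (2^{m-1}-2^{m-k_1-1})(2^{n-1}-1)$ states, together with a routine verification that the resulting DFA accepts $L(A)^*L(B)$ (including merging the new initial state of the star NFA with $\langle\{s_1\},\{s_2\}\rangle$). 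As written, your proposal does not establish the bound.
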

\begin{proof}
We denote $F_1 - \{s_1\}$ by $F_0$. Then, $|F_0| = k_1 \ge 1$.

We construct a DFA $C = \{Q, \Sigma, \delta, s, F\}$ for the language $L_1^* L_2$, where $L_1$ and $L_2$ are the languages accepted by DFAs $A$ and $B$, respectively.

Let $Q = \{ \langle p, t \rangle \mid p \in P \mbox{ and } t \in T\} - \{\langle p', t' \rangle \mid p' \in P' \mbox{ and } t' \in T'\}$, where
\begin{eqnarray*}
    P & = & \{ R \mid R \subseteq (Q_1 - F_0) \mbox{ and } R \neq \emptyset\} \cup \{ R \mid R \subseteq Q_1, s_1 \in R, \mbox{ and } R \cap F_0 \neq \emptyset \},\\
    T & = &  2^{Q_2} - \{\emptyset\}, \\
    P' & = & \{ R \mid R \subseteq Q_1, s_1 \in R, \mbox{ and } R \cap F_0 \neq \emptyset \}, \\
    T' & = & 2^{Q_2 - \{s_2\}}- \{\emptyset\}.
\end{eqnarray*}

The initial state $s$ is $s = \langle \{s_1\}, \{s_2\}\rangle$.

The set of final states is defined to be $F = \{ \langle p, t\rangle \in Q \mid t \cap F_2 \neq \emptyset\}$.

The transition relation $\delta$ is defined as follows:
\[
\delta (\langle p,t \rangle, a) = \left\{
\begin{array}{l l}
  \langle p', t'\rangle & \quad \text{if $p' \cap F_1 = \emptyset$,}\\
  \langle p', t' \cup \{s_2\} \rangle & \quad \text{otherwise,}\\
\end{array} \right.
\]
where, $a \in \Sigma$, $p' = \delta_1(p, a)$, and $t' = \delta_2(t, a)$.

    Intuitively, $C$ is equivalent to the NFA $C'$ obtained by first constructing an NFA $A'$ that accepts $L_1^*$, then catenating this new NFA with DFA $B$ by $\lambda$-transitions.
    Note that, in the construction of $A'$, we need to add a new initial and final state $s_1'$.
    However, this new state does not appear in the first component of any of the states in $Q$.
    The reason is as follows.
    First, note that this new state does not have any incoming transitions.
    Thus, from the initial state $s_1'$ of $A'$, after reading a nonempty word, we will never return to this state.
    As a result, states $\langle p, t \rangle$ such that $p \subseteq Q_1 \cup \{s_1'\}$, $ s_1' \in p$, and $t \in 2^{Q_2}$ is never reached in DFA $C$ except for the state $\langle\{s_1'\}, \{s_2\}\rangle$.
    Then, we note that, in the construction of $A'$, states $s_1'$ and $s_1$ should reach the same state on any letter in $\Sigma$.
    Thus, we can say that states $\langle \{s_1'\}, \{s_2\}\rangle$ and $\langle \{s_1\}, \{s_2\} \rangle$ are equivalent, because either of them is final if $s_2 \not\in F_2$, and they are both final states otherwise.
    Hence, we merge this two states and let $\langle \{s_1\}, \{s_2\} \rangle$ be the initial state of $C$.

    Also, we notice that states $\langle p, \emptyset \rangle$ such that $p \in P $ can never be reached in $C$, because $B$ is complete.

    Moreover, $C$ does not contain those states whose first component contains a final state of $A$ and whose second component does not contain the initial state of $B$.

    Therefore, we can verify that DFA $C$ indeed accepts $L_1^* L_2$, and it is clear that the size of $Q$ is
 \[
    (\dfrac{3}{4}2^m - 1)(2^n-1) - (2^{m-1}-2^{m-k_1-1})(2^{n-1}-1).
 \]
\end{proof}

Then, we show that this upper bound is reachable by some witness DFAs.

\begin{figure}[ht]
  \begin{center}
  \includegraphics[scale=0.17]{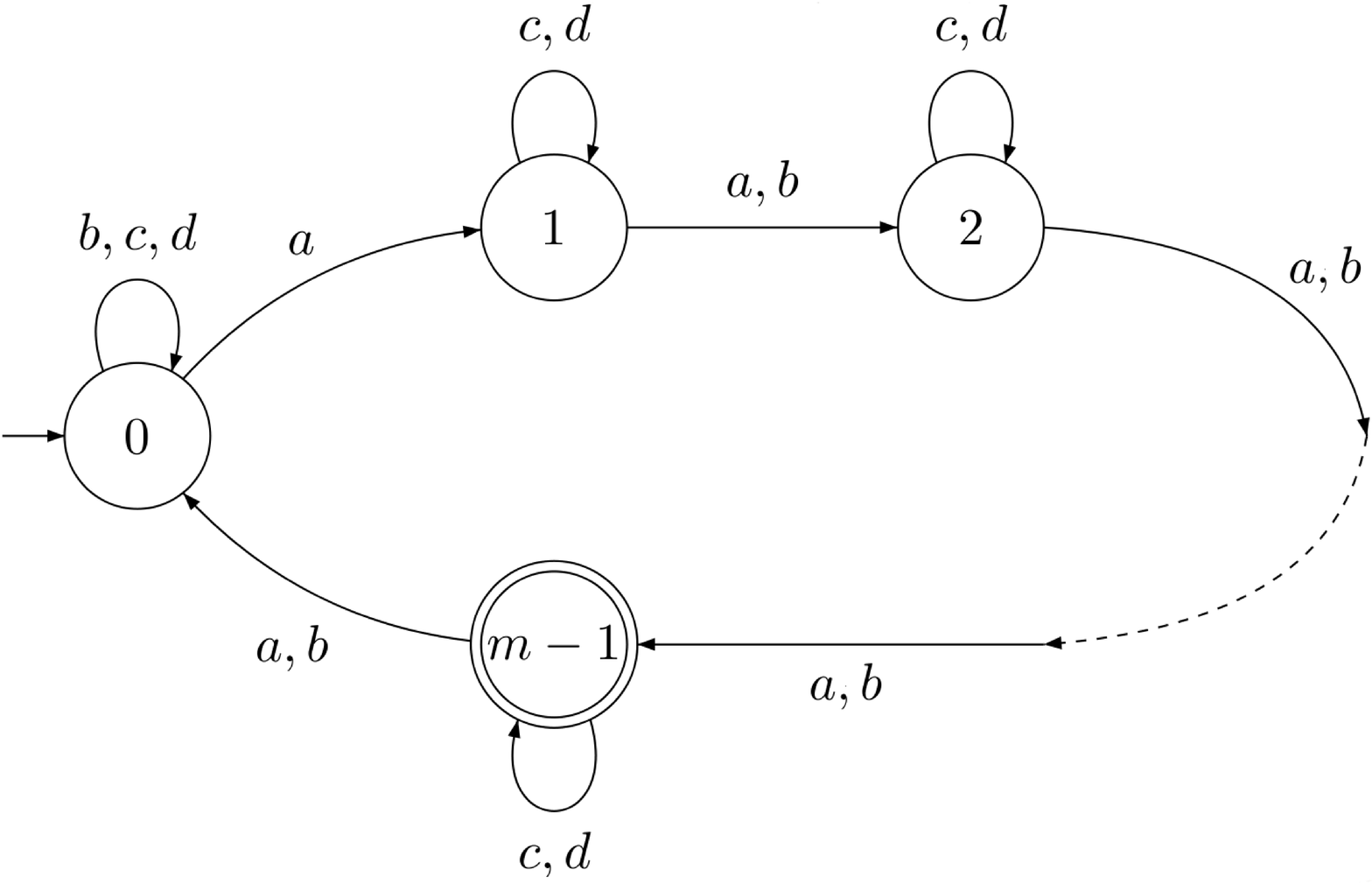}
  \end{center}
  \caption{Witness DFA $A$ for Theorem~\ref{thm:star-cat-lower}}
\label{fig:DFAA-star-cat}
\end{figure}

\begin{figure}[ht]
  \begin{center}
  \includegraphics[scale=0.17]{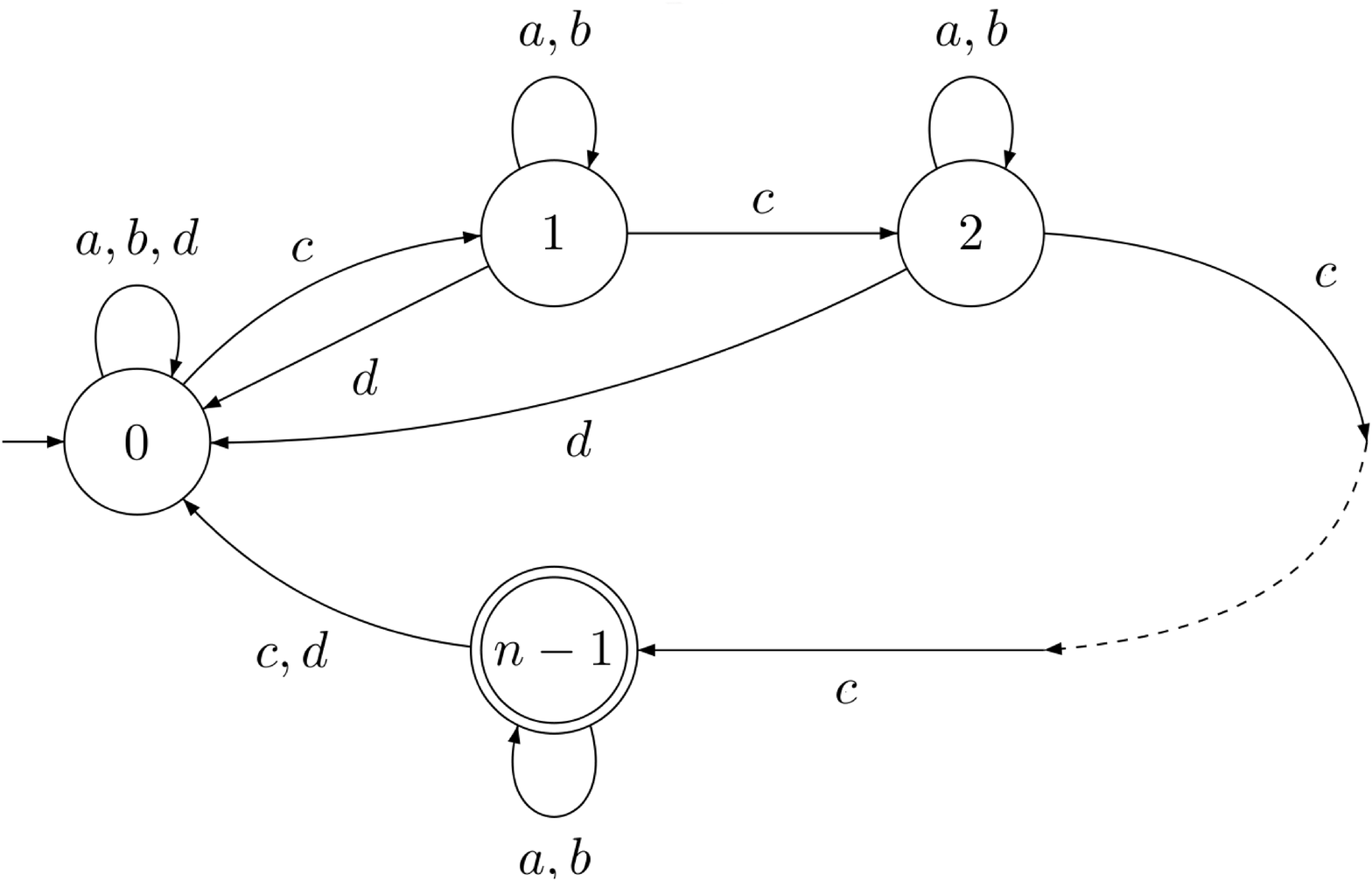}
  \end{center}
  \caption{Witness DFA $B$ for Theorem~\ref{thm:star-cat-lower}}
\label{fig:DFAB-star-cat}
\end{figure}

\begin{theorem}\label{thm:star-cat-lower}
For any integers $m,n \ge 2$, there exist a DFA $A$ of $m$ states and a DFA $B$ of $n$ states such that any DFA accepting $L(A)^* L(B)$ needs at least $5 \cdot 2^{m+n-3} - 2^{m-1} - 2^n +1$ states.
\end{theorem}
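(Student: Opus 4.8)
The plan is to exhibit explicit witness DFAs $A$ (on $m$ states, with a single final state that is \emph{not} the initial state, so that $k_1 = 1$) and $B$ (on $n$ states), shown in Figures~\ref{fig:DFAA-star-cat} and~\ref{fig:DFAB-star-cat}, and to prove that the DFA $C$ obtained from $A$ and $B$ by the construction in the proof of Theorem~\ref{thm:star-cat-upper} is already minimal. The reason for taking $k_1 = 1$ is that substituting $k_1 = 1$ into the bound of Theorem~\ref{thm:star-cat-upper} gives exactly $5\cdot 2^{m+n-3} - 2^{m-1} - 2^n + 1$; hence $C$ has precisely the target number of states, and it suffices to show (I) that every state of $C$ is reachable from its initial state $\langle\{s_1\},\{s_2\}\rangle$, and (II) that any two distinct states of $C$ are inequivalent. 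Writing $Q_1 = \{0,\dots,m-1\}$ with $s_1 = 0$, $F_1 = \{m-1\}$ and $Q_2 = \{0,\dots,n-1\}$ with $s_2 = 0$, $F_2 = \{n-1\}$, the states of $C$ are the pairs $\langle p,t\rangle$ with $\emptyset \neq t \subseteq Q_2$ and $p$ either a nonempty subset of $Q_1\setminus\{m-1\}$ (with $t$ unrestricted) or a subset of $Q_1$ with $\{0,m-1\}\subseteq p$ (for which the construction forces $0\in t$); a direct count gives $(2^{m-1}-1)(2^n-1)+2^{m+n-3} = 5\cdot 2^{m+n-3} - 2^{m-1} - 2^n + 1$ such pairs.

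For the witnesses I would take, on the $A$-side, the full cycle $a\colon i\mapsto i+1\bmod m$ together with one or two further letters chosen, as in the standard witnesses for the state complexity of star and in the DFA $M$ of Theorem~\ref{L_1^R L_2 lower bound}, so that an arbitrary first component can be assembled from $\{s_1\}$; and, on the $B$-side, a letter acting as the full cycle $i\mapsto i+1\bmod n$ on $Q_2$ (the mechanism for building up subsets of $Q_2$ one element at a time) together with a letter fixing $s_2 = 0$ and permuting the rest, so that the $B$-component can be enlarged without disturbing the $A$-component. The alphabet is shared by both machines, each letter acting on the ``other'' machine either as the identity or in a controlled way, so that moves on one coordinate can be carried out essentially independently of the other.

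The reachability step~(I) is the technical heart of the argument and the part I expect to be the main obstacle. I would organise it as an induction on $|t|$. In the base case $t=\{s_2\}=\{0\}$ one shows that every admissible first component occurs: ordinary $p\subseteq Q_1\setminus\{m-1\}$ by the same subset-reachability scheme used for $L(M)^R$ in the proof of Theorem~\ref{L_1^R L_2 lower bound} (strings of the shape $ab(ac)^{j_1}ab(ac)^{j_2}\cdots$), and special $p\supseteq\{0,m-1\}$ by first reaching an ordinary $p$ and then taking the $a$-step that moves $m-2$ into $m-1$, which by the definition of $C$ simultaneously reinjects $s_2=0$ into the second component and so keeps us inside $Q$. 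For the inductive step, to reach $\langle p,t\rangle$ with $|t|=k$, I would alternate full $a$-cycles on the $A$-component — each such cycle passes through $m-1$ exactly once and hence injects one fresh copy of $s_2$ into the $B$-component — with partial cycles on the $B$-component that position each injected element, exactly as for catenation and as in Case~3 of the proof of Theorem~\ref{L_1^R L_2 lower bound}, finishing with a block of $a$'s to restore the first component. The bookkeeping that must be checked at every step is that all intermediate pairs genuinely lie in $Q$: that the constraint ``first component meets $F_1$ implies $s_2$ lies in the second component'' is never violated, and that iterating the $B$-cycle letter does not carry some element spuriously into the final state $n-1$. Making this bookkeeping go through cleanly is precisely what fixes the exact transitions of the two witnesses.

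For the distinguishability step~(II), let $\langle p,t\rangle\neq\langle p',t'\rangle$. If $t\neq t'$, take (without loss of generality) $x\in t\setminus t'$ and apply $b^{\,n-1-x}$; on the $B$-side this carries exactly the element $x$ of a subset to the unique final state $n-1$, so one of the two states goes into $F$ and the other does not, while the witness is arranged so that the $A$-side makes no spurious reinjection along this word. If $t=t'$ but $p\neq p'$, pick $y$ in the symmetric difference of $p$ and $p'$, first apply a word that routes the $A$-component so that $y$ reaches $m-1$ while no state of the other first component does — forcing a difference between the two $B$-components — and then apply a word of the previous type. Steps (I) and (II) together show that $C$ is minimal, so any DFA accepting $L(A)^*L(B)$ needs at least $5\cdot 2^{m+n-3} - 2^{m-1} - 2^n + 1$ states, matching the upper bound of Theorem~\ref{thm:star-cat-upper} with $k_1 = 1$.
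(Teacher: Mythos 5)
Your overall plan coincides with the paper's: take witnesses in which $A$ has the single final state $m-1\neq s_1$ (so $k_1=1$ and the bound of Theorem~\ref{thm:star-cat-upper} specializes to $5\cdot 2^{m+n-3}-2^{m-1}-2^n+1$; your count of the admissible pairs is correct), build $C$ by that construction, and prove minimality with reachability by induction on $|t|$ plus a two-case distinguishability argument. However, the two places where you defer the work are exactly where your sketched mechanisms break. First, the first component of $C$ evolves by \emph{forward} deterministic images of $\delta_1$, and it can grow only when the image meets $m-1$, at which point $s_1=0$ is re-injected into the first component and $s_2=0$ into the second. The reversal-style strings $ab(ac)^{j_1}ab(ac)^{j_2}\cdots$ of Theorem~\ref{L_1^R L_2 lower bound} build subsets in the \emph{backward} subset automaton and have no force here; the correct tool (used by the paper) is the star witness of Theorem 5 of~\cite{YuZhSa94} ($a$ the $m$-cycle, $b$ fixing $0$ and cycling the rest), whose proof is invoked for the first component. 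For the same reason your inductive step fails: a full $a$-cycle does \emph{not} restore the first component, because each passage through $m-1$ re-injects $0$ and the set strictly grows in general (e.g.\ $\{1\}$ does not return to $\{1\}$). The paper avoids this by enlarging $t$ while keeping $p=\{0\}$, using words of the form $c^{j_2}a^{m-1}b$ (the letter $b$ collapses $\{m-1,0\}$ back to $\{0\}$, and the $B$-letters $c,d$ are identities on $Q_1$), and only afterwards expanding $p$ with $\{a,b\}$-words.

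Second, your distinguishability argument for $p\neq p'$ (with $t=t'$) does not force a difference in the second components: driving $y$ to $m-1$ merely injects $s_2=0$, which is invisible when both second components already contain $0$ (e.g.\ $t=t'\ni 0$, or $t=t'=Q_2$). Worse, your $B$-side alphabet as described (a full cycle plus a letter fixing $0$ and permuting the rest) contains no collapsing letter; if every letter acts injectively on $Q_2$, then every state of the form $\langle p,Q_2\rangle$ is accepting under all inputs, so all such states with different $p$ are equivalent and the witness cannot be minimal. The paper's fourth letter $d$ (identity on $Q_1$, sending all of $Q_2$ to $0$) is essential: the distinguishing word is $a^{m-1-i}dc^n$, where after the reset the state whose first component contains $m-1$ keeps re-injecting $0$ along $c^n$ and fills $Q_2$ (hence accepts), while the other returns to $\{0\}$ and rejects. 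Until you fix the witness alphabet to include such a resetting letter and replace the cycle-and-restore reachability scheme by one compatible with the star re-injection, the proof does not go through.
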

\begin{proof}
We define the following two automata over a four letter alphabet $\Sigma = \{a,b,c,d\}$.

Let $A = (Q_1, \Sigma, \delta_1, 0, \{m-1\})$, shown in Figure~\ref{fig:DFAA-star-cat}, where $Q_1 = \{0,1,\ldots,m-1\}$, and the transitions are defined as
    \begin{itemize}
    \item $\delta_1 (i, a) = i+1 \mbox{ mod } m$, for $i \in Q_1$,
    \item $\delta_1 (0, b) = 0$, $\delta_1 (i, b) = i+1 \mbox{ mod } m$, for $i \in \{1, \ldots, m-1\}$,
    \item $\delta_1 (i, x) = i$, for $i \in Q_1$, $x \in \{c,d\}$.
    \end{itemize}

Let $B = (Q_2, \Sigma, \delta_2, 0, \{n-1\})$, shown in Figure~\ref{fig:DFAB-star-cat}, where $Q_2 = \{0,1,\ldots,n-1\}$, and the transitions are defined as
    \begin{itemize}
    \item $\delta_2 (i, x) = i$, for $i \in Q_2$, $x \in \{a,b\}$,
    \item $\delta_2 (i, c) = i+1 \mbox{ mod } n$, for $i \in Q_2$,
    \item $\delta_2 (i, d) = 0$, for $i \in Q_2$.
    \end{itemize}

    Let $C = \{Q, \Sigma, \delta, \langle \{0\}, \{0\} \rangle, F\}$ be the DFA accepting the language $L(A)^*L(B)$ which is constructed from $A$ and $B$ exactly as described in the proof of Theorem~\ref{thm:star-cat-upper}.

    Now, we prove that the size of $Q$ is minimal by showing that (I) any state in $Q$ can be reached from the initial state, and (II) no two different states in $Q$ are equivalent.

    We first prove (I) by induction on the size of the second component $t$ of the states in $Q$.

    {\bf Basis:} for any $i \in Q_2$, state $\langle \{0\}, \{i\} \rangle$ can be reached from the initial state $\langle \{0\}, \{0\}\rangle$ on string $c^i$.
    Then, by the proof of Theorem 5 in~\cite{YuZhSa94}, it is clear that state $\langle p, \{i\}\rangle$ of $Q$, where $p \in P$ and $i \in Q_2$, is reachable from state $\langle \{0\}, \{i\}\rangle$ on strings over letters $a$ and $b$.

    {\bf Induction step:} assume that all the states $\langle p, t\rangle$ in $Q$ such that $p \in P$ and $|t| < k$ are reachable.
    Then, we consider the states $\langle p, t\rangle$ in $Q$ where $p \in P$ and $|t| = k$.
    Let $t = \{j_1, j_2, \ldots, j_k\}$ such that $0 \le j_1 < j_2 < \ldots < j_k \le n-1$.

    Note that states such that $p = \{0\}$ and $j_1 = 0$ are reachable as follows:
\[
    \langle \{0\}, \{0, j_2, \ldots, j_k\}\rangle = \delta( \langle \{0\}, \{0, j_3 - j_2, \ldots, j_k - j_2 \}\rangle, c^{j_2}a^{m-1}b).
\]
Then, states such that $p = \{0\}$ and $j_1 > 0$ can be reached as follows:
\[
    \langle \{0\}, \{j_1, j_2, \ldots, j_k\}\rangle = \delta(\langle \{0\}, \{0, j_2 - j_1, \ldots, j_k - j_1\}\rangle, c^{j_1}).
\]

Once again, by using the proof of Theorem 5 in~\cite{YuZhSa94}, states $\langle p, t\rangle$ in $Q$, where $p \in P$ and $|t| = k$, can be reached from the state $\langle \{0\}, t \rangle$ on strings over letters $a$ and $b$.

Next, we show that any two states in $Q$ are not equivalent.
Let $\langle p, t\rangle$ and $\langle p', t' \rangle$ be two different states in $Q$.
We consider the following two cases:
\begin{enumerate}
\item $p \neq p'$.
Without loss of generality, we assume $|p| \ge |p'|$.
Then, there exists a state $ i \in p - p'$.
It is clear that string $a^{m-1-i}dc^n$ is accepted by $C$ starting from state $\langle p, t\rangle$, but it is not accepted starting from state $\langle p', t' \rangle$.

\item $p = p'$ and $t \neq t'$.
We may assume that $|t| \ge |t'|$ and let $j \in t-t'$.
Then, state $\langle p, t \rangle$ reaches a final state on string $c^{n-1-j}$, but state $\langle p', t' \rangle$ does not on the same string.
Note that, when $m-1 \in p$, we can say that $j \neq 0$.
\end{enumerate}

Due to (I) and (II), DFA $C$ has at least $5 \cdot 2^{m+n-3} - 2^{m-1} - 2^n +1$ reachable states, and any two of them are not equivalent.
\end{proof}

\section{Conclusion}\label{sec:conclusion}
In this paper, we have studied the state complexities of two combined operations: reversal combined with catenation and star combined with catenation.
We showed that, due to the structural properties of DFAs obtained from reversal and star, the state complexities of these two combined operations are not equal but close to the mathematical compositions of the state complexities of their individual participating operations.

\end{document}